\title[Title]{Random Majority Opinion Diffusion: Stabilization Time, Absorbing States, and Influential Nodes}
\author{Ahad N. Zehmakan}
\affiliation{
  \institution{The Australian National University}
  \city{Canberra}
  \country{Australia}}
\email{ahadn.zehmakan@anu.edu.au}
\begin{abstract}
Consider a graph $G$ with $n$ nodes and $m$ edges, which represents a social network, and assume that initially each node is blue or white (indicating its opinion on a certain topic). In each round, all nodes simultaneously update their color to the most frequent color in their neighborhood. This is called the
Majority Model (MM) if a node keeps its color in case of a tie and the Random Majority Model (RMM) if it chooses blue with probability $1/2$ and white otherwise.

We prove that there are graphs for which RMM needs exponentially many rounds to reach a stable configuration in expectation, and such a configuration can have exponentially many states (i.e., colorings). This is in contrast to MM, which is known to always reach a stable configuration with one or two states in $\mathcal{O}(m)$ rounds. For the special case of a cycle graph $C_n$, we prove the stronger and tight bounds of $\lceil n/2\rceil-1$ and $\mathcal{O}(n^2)$ in MM and RMM, respectively. Furthermore, we show that the number of stable colorings in MM on $C_n$ is equal to $\Theta\left(\Phi^n\right)$, where $\Phi = (1+\sqrt{5})/2$ is the golden ratio, while it is equal to 2 for RMM. Our results demonstrate how minor local alterations, such as tie-breaking rule, can significantly influence the global behavior of the process.

We also study the minimum size of a winning set, which is a set of nodes whose agreement on a color in the initial coloring enforces the process to end in a coloring where all nodes share that color. We present tight bounds on the minimum size of a winning set for both MM and RMM.

Furthermore, we analyze our models for a random initial coloring, where each node is colored blue independently with some probability $p$ and white otherwise. Using some martingale analysis and counting arguments, we prove that the expected final number of blue nodes is respectively equal to $(2p^2-p^3)n/(1-p+p^2)$ and $pn$ in MM and RMM on a cycle graph $C_n$.

Finally, we conduct some experiments which complement our theoretical findings and also lead to the proposal of some intriguing open problems and conjectures to be tackled in the future work.
\end{abstract}
\keywords{majority model; opinion diffusion; social networks; Markov chains; social choice; preference aggregation; influence propagation}
\newcommand{\BibTeX}{\rm B\kern-.05em{\sc i\kern-.025em b}\kern-.08em\TeX}
\begin{document}


\pagestyle{fancy}
\fancyhead{}


\maketitle 
\section{Introduction}

When facing a decision or forming an opinion about a subject such as the quality of a technological innovation or the success of a political party, we are usually influenced by the opinion of our friends, family, colleagues, and the figures whose opinions we value. Hence, our opinions are constantly influenced and shaped through interactions with our connections. Furthermore, due to the extensive rise in the usage of online social platforms such as Facebook, Instagram, WeChat, TikTok, and Twitter, opinions are exchanged and formed at a higher pace.

Companies, political parties, and even governments attempt to leverage the power of opinion formation and influence propagation through online social platforms to reach their commercial and political goals. For example, marketing campaigns routinely use online social networks to sway people’s opinions in their favor, by targeting subsets of members with free samples of their products or misleading information. Therefore, opinion diffusion and (mis)-information spreading can affect different aspects of our lives from economics and politics to fashion and music.

There has been a fast-growing demand for a better and deeper understanding of opinion formation and information spreading processes in social networks.
A more profound knowledge of the collective decision-making and opinion diffusion processes would let us control and regulate the effect of marketing and political campaigns and stop the spread of misinformation.

The evolution of social dynamics has been a topic of intense study by researchers from a wide range of backgrounds such as economics~\cite{bharathi2007competitive}, epidemiology~\cite{pastor2001epidemic}, social psychology~\cite{yin2019agent}, and statistical physics~\cite{galam2008sociophysics}. It particularly has gained significant popularity in theoretical computer science, especially in the rapidly growing literature focusing on the interface between social choice and social networks, cf.~\cite{bredereck2017manipulating, auletta2018reasoning, huang2013impact}.

Numerous models have been proposed to simulate the opinion formation processes. It is inherently difficult to develop models which reflect reality perfectly since these processes are way too complex to be expressed in purely mathematical terms. Therefore, a suitable model strives to capture the fundamental properties of opinion spreading processes, but at the same time be simple enough to permit accurate and profound mathematical analysis. Therefore, the objective is to establish models which justifiably approximate the real opinion diffusion processes by disregarding less essential, but distracting, parameters. The analysis of such approximate models would allow researchers to shed some light on the fundamental principles and recurring patterns in the opinion diffusion processes, which are otherwise concealed by the intricacy of the full process.

Each opinion diffusion model has three essential components. Firstly, one needs to define how the interactions between the individuals take place. A well-received choice is to use a graph structure, where a node represents an individual and an edge between two nodes corresponds to a relation between the respective individuals, e.g. friendship or common interests. Secondly, there exist different options for modeling the opinion of the individuals. A popular choice is to assign a binary value, say blue or white, to each node, which indicates whether the node is positive or negative about a certain topic. Last but not the least, a crucial component of any model is its updating rule which defines how and in what order the nodes update their opinion. In the plethora of various updating rules, the majority rule, where a node chooses the most frequent opinion (i.e., color) in its neighborhood, has attracted a substantial amount of attention.


Different aspects of opinion diffusion models have been investigated, both theoretically (by exploiting the rich tool kit from graph and probability theory) and experimentally (by conducting a vast spectrum of experiments on graph data from real-world social networks). An enormous part of the research performed in this area falls under the umbrella of the following three fundamental questions:

\begin{enumerate}
    \item How long does it take for the process to reach a stable configuration, and how does such a stable configuration look?
    \item What is the minimum number of nodes which need to be blue to ensure that the whole graph eventually becomes~blue?
    \item What is the expected final number of blue nodes starting from a random initial coloring?
\end{enumerate}

In the present paper, we contribute to the study of the aforementioned questions for two of the most basic majority based models on general graphs and special classes of graphs, in particular cycles.

\textbf{Roadmap.} In the rest of this section, we first provide some basic definitions which create the ground to describe our contributions in more depth; then, we give a brief overview of the relevant prior work. Our theoretical findings to address questions (1), (2), (3) are presented in Sections~\ref{stabilizarion},~\ref{winning set},~\ref{random initial}, respectively. Finally, our experimental results are provided in Section~\ref{experiments}.

\subsection{Preliminaries}
\label{preliminaries}
\textbf{Graph Definitions.} Let $G=\left(V,E\right)$ be a simple connected undirected graph and define $n:= |V|$ and $m :=|E|$. For a node $v\in V$, $N\left(v\right):=\{u\in V: \{u,v\} \in E\}$ is the \emph{neighborhood} of $v$. For a set $S\subset V$, we define $N_S\left(v\right):=N\left(v\right)\cap S$. Moreover, $d\left(v\right):=|N\left(v\right)|$ is the \emph{degree} of $v$ and $d_S\left(v\right):=|N_S\left(v\right)|$. Note that whenever graph $G$ is not clear from the context, we add a superscript, e.g. $d^G(v)$.

\textbf{Models.} For a graph $G$, a \emph{coloring} is a function $\mathcal{C}:V\rightarrow\{b,w\}$, where $b$ and $w$ represent blue and white. For a node $v\in V$, the set $N_a^{\mathcal{C}}\left(v\right):=\{u\in N\left(v\right):\mathcal{C}\left(u\right)=a\}$ includes the neighbors of $v$ which have color $a\in\{b,w\}$ in the coloring $\mathcal{C}$. Furthermore, we write $\mathcal{C}|_S=a$ for a set $S\subseteq V$ if $\mathcal{C}(v)=a$ for every $v\in S$.

Assume that we are given an initial coloring $\mathcal{C}_0$ on a graph $G$. In a model $M$, $\mathcal{C}_t\left(v\right)$, which is the color of node $v$ in round $t\in\mathbb{N}$, is determined based on a predefined updating rule. We are interested in the \textit{Majority Model (MM)} where the updating rule is as follows:

$\mathcal{C}_t(v)$ = 
$\begin{cases} 
\mathcal{C}_{t-1}(v) \quad if \ 
|N_b^{\mathcal{C}_t-1}(v)| = |N_w^{\mathcal{C}_{t-1}}(v)|\\
argmax_{a \in \{b, w\}}|N_a^{\mathcal{C}_{t-1}}(v)| \quad otherwise\\
\end{cases}.$

In other words, each node chooses the most frequent color in its neighborhood and keeps its color in case of a tie. The \textit{Random Majority Model (RMM)} is the same as MM except that in case of a tie, a node chooses one of the two colors independently and uniformly at random.

In these models, we define $b_t$ and $w_t$ for $t\in \mathbb{N}_0$ to be the number of blue and white nodes in $\mathcal{C}_t$. These correspond to random variables in RMM and also in MM when the initial coloring is random.

We say the process reaches the \emph{blue (white) coloring} if it reaches the coloring where all nodes are blue (white). For a cycle graph $C_n$ with even $n$, there are two colorings where every two adjacent nodes have different colors. We call these two colorings the \emph{alternating colorings}. If MM or RMM process reaches one of the two alternating colorings, it keeps switching between them. We say the process has reached the \emph{blinking configuration}.

For a graph $G$, we say that a coloring $\mathcal{C}$ is \emph{stable} if \textit{one} application of MM (similarly RMM) on $\mathcal{C}$ deterministically outputs $\mathcal{C}$. (For RMM, this implies that there are no ties.) Note that a stable coloring need not be monochromatic. Furthermore, a \emph{$p$-random coloring,} for $0\le p \le 1$, is a coloring where each node is colored blue independently with probability (w.p.) $p$
 and white otherwise.

\textbf{Stabilization Time and Periodicity.} Since the updating rule in MM is deterministic and there are $2^n$ possible colorings, for any initial coloring the process reaches a cycle of colorings and remains there forever. The number of rounds the process needs to reach the cycle is the \textit{stabilization time} and the length of the cycle is the \textit{periodicity} of the process.

RMM on an $n$-node graph $G$ corresponds to a Markov chain. This Markov chain has $2^n$ states (i.e., $2^n$ possible colorings) and there is an edge from state $s$ to $s'$ if there is a non-zero probability to go from $s$ to $s'$ in RMM. Since this is a directed graph, its state set can be partitioned into maximal strongly connected components. (A state set is a strongly connected component if every state is reachable from every other state, and it is maximal if the property does not hold when we add any other state to the set.) Furthermore, we say a maximal strongly connected component is an absorbing component if it has no outgoing edge. If each maximal strongly connected component is contracted to a single state, the resulting graph is a directed acyclic graph. This implies that in RMM, regardless of the initial coloring, the process eventually reaches an absorbing component and remains there forever. The expected number of rounds the process needs to reach an absorbing component is the \textit{stabilization time} and the size of the absorbing component is the \textit{periodicity} of the process. In simple words, the process eventually reaches a subset of states (colorings) and keeps transitioning between them. The stabilization time is the expected number of rounds to get there, and the periodicity is their number.

\textbf{Winning and Resilient Sets.} For MM or RMM on a graph $G=(V,E)$, we say a node set $S \subseteq V$ is a \textit{winning set} whenever the following holds: If all nodes in $S$ are blue (white), then the process eventually reaches the blue (white) coloring regardless of the color of nodes in $V\setminus S$ and all the random choices (in RMM). Furthermore, we say a node set $S \subseteq V$ is a \textit{resilient set} whenever the following holds: If $S$ is fully blue (white) then all nodes in $S$ remain blue (white) forever, regardless of the color of the other nodes and the random choices. We observe that a set $S$ is resilient in MM (resp. RMM) if and only if for every node $v\in S$, $|N_{S}(v)|\ge d(v)/2$ (resp. $|N_{S}(v)|> d(v)/2$).

\textbf{Path Partition.} Consider a cycle $C_n$ and a coloring $\mathcal{C}$. We say a path is blue (white) if all its nodes are blue (white). A path is \emph{monochromatic} if it is blue or white. Furthermore, a path is \emph{alternating} if every two adjacent nodes have opposite colors. The length of a path is its number of nodes and an even (odd) path is a path whose length is even (odd). Except when $n$ is even and $\mathcal{C}$ is one of the two alternating colorings, there must exist at least one monochromatic path of length two or larger. Let $B$ (resp. $W$) be the set of nodes on the maximal blue (resp. white) paths of length at least two in $\mathcal{C}$. Then, all the nodes which are not in $B\cup W$ can be partitioned into maximal alternating paths, which are surrounded by the aforementioned monochromatic paths. We call the union of these maximal monochromatic and alternating paths, the \emph{path partition} in $\mathcal{C}$.


\textbf{McDiarmid's Inequality.} We use an extension of McDiarmid's inequality which gives a bound on the input sensitivity of random variables when differences in the output satisfy some bound.

\begin{definition}
Let $X:\Omega\rightarrow \mathbb{R}$ be a random variable over the probability space $\Omega=\{0,1\}^n$. We say $X$ is \emph{difference-bounded} by $(\beta,c,\delta)$ if the following holds: (i) there is a ``bad'' subset $B\subset \Omega$, where $|B|/|\Omega|=\delta$ (ii) if $\omega,\omega'\in \Omega$ differ only in the $i$-th coordinate, and $\omega\notin B$, then $|X(\omega)-X(\omega')|\le c$ (iii) for any $\omega$ and $\omega'$ differing only in the $i$-th coordinate, $|X(\omega)-X(\omega')|\le \beta$.
\end{definition}

\begin{theorem}[An Extension of McDiarmid's Inequality~\cite{kutin2002extensions}]
\label{McDiarmid-thm}
Let random variable $X:\{0,1\}^n\rightarrow \mathbb{R}$ be difference-bounded by $(\beta,c,\delta)$, then for any $\epsilon>0$, the probability $\mathbb{P}[(1-\epsilon)\mathbb{E}[X]\le X\le (1+\epsilon)\mathbb{E}[X]]$ is at least $1-2\exp\left(\frac{-\epsilon^2\mathbb{E}[X]^2}{8nc^2}\right)-\frac{2\delta n\beta}{c}$. 
\end{theorem}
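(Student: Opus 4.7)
The plan is to adapt the standard Doob-martingale proof of McDiarmid's inequality so that the rare ``bad'' configurations in $B$ are handled by a separate union bound rather than through the bounded-differences property. Let $\mathcal{F}_i = \sigma(\omega_1, \ldots, \omega_i)$ be the natural filtration and consider the Doob martingale $Z_i = \mathbb{E}[X \mid \mathcal{F}_i]$, so that $Z_0 = \mathbb{E}[X]$ and $Z_n = X$. The martingale differences $D_i := Z_i - Z_{i-1}$ satisfy $|D_i| \le \beta$ pointwise by hypothesis~(iii); the goal is to sharpen this to $|D_i| \le c$ off a rare event, using~(i)--(ii).

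The first step is a coupling argument. Writing $D_i$ as $\mathbb{E}[X(\omega) - X(\omega') \mid \mathcal{F}_{i-1}]$ for two continuations $\omega, \omega'$ that differ only in the $i$-th coordinate, I split the expectation according to whether both $\omega$ and $\omega'$ lie outside $B$, in which case the integrand is bounded by $c$ via~(ii), or at least one lies in $B$, an event of probability at most $2\delta$, in which case the integrand is bounded by $\beta$ via~(iii). Thus each $|D_i|$ is bounded by $c$ everywhere except on an event of $\mathcal{F}_i$-probability $O(\delta)$.

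Next I would apply a truncation version of Azuma--Hoeffding: define $\tilde D_i = D_i \cdot \mathbf{1}[|D_i|\le c]$ and the telescoping sum $\tilde X := \mathbb{E}[X] + \sum_{i=1}^n \tilde D_i$. The standard exponential moment estimate $\mathbb{E}[e^{\lambda \tilde D_i} \mid \mathcal{F}_{i-1}] \le e^{\lambda^2 c^2/2}$, followed by the usual optimisation in $\lambda$, gives $\Pr[|\tilde X - \mathbb{E}[X]| \ge \epsilon\,\mathbb{E}[X]] \le 2\exp(-\epsilon^2 \mathbb{E}[X]^2/(8nc^2))$, which matches the dominant exponential term in the statement. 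The discrepancy between $X$ and $\tilde X$ arises only from coordinates at which truncation is activated, and a union bound across the $n$ coordinates—each contributing probability at most $\delta$ of an untruncated jump of magnitude at most $\beta$, normalised by the truncation level $c$—accounts for the additive $2\delta n \beta/c$ correction.

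The main obstacle will be making the truncation argument rigorous: after truncating, $\tilde X$ is no longer exactly a martingale, and one has to argue that the drift $\mathbb{E}[\tilde D_i \mid \mathcal{F}_{i-1}]$ introduced by the conditional removal of bad increments is negligible compared to $\epsilon\,\mathbb{E}[X]/n$. Kutin's cleanest route circumvents this by working directly with a mixed MGF bound of the form $\mathbb{E}[e^{\lambda D_i} \mid \mathcal{F}_{i-1}] \le (1-\delta)\,e^{\lambda^2 c^2/2} + \delta\,e^{\lambda^2 \beta^2/2}$, after which choosing $\lambda$ proportional to $\epsilon\,\mathbb{E}[X]/(nc^2)$ makes the first summand yield the exponential concentration while the second, aggregated over $i$, produces the additive $2\delta n\beta/c$ term without degrading the main exponent.
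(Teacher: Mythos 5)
The paper offers no proof of Theorem~\ref{McDiarmid-thm}: it is quoted from Kutin~\cite{kutin2002extensions} and used as a black box, so there is nothing in-paper to compare against. Judged on its own, your skeleton (Doob martingale, truncate the increments at level $c$, Azuma for the truncated sum, a separate bound for the event that truncation ever fires) is the right one and is essentially Kutin's, but the two load-bearing steps are not correctly justified as written. First, the per-coordinate bad probability is not ``at most $2\delta$.'' Writing $\delta_i(\omega_1,\dots,\omega_{i-1})$ for the conditional probability (over the suffix) that the pair of points differing in coordinate $i$ meets $B$, only the average $\mathbb{E}[\delta_i]\le 2\delta$ is guaranteed; for particular prefixes $\delta_i$ can be close to $1$. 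The correct mechanism is: splitting the inner expectation over suffixes into good and bad pairs gives $|D_i|\le\tfrac12\left(c+\beta\delta_i\right)$ pointwise, hence $\Pr[|D_i|>c]\le\Pr[\delta_i>c/\beta]\le 2\delta\beta/c$ by Markov's inequality, and a union bound over the $n$ coordinates yields the additive $2\delta n\beta/c$. Your phrase ``probability at most $\delta$ of an untruncated jump \dots normalised by the truncation level $c$'' lands on the right expression but is not a valid probabilistic step; the factor $\beta/c$ comes from Markov applied to the random conditional bad-mass, not from rescaling a probability by $c$.

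Second, the proposed fallback via the mixed moment bound $\mathbb{E}[e^{\lambda D_i}\mid\mathcal{F}_{i-1}]\le(1-\delta)e^{\lambda^2c^2/2}+\delta e^{\lambda^2\beta^2/2}$ is not Kutin's route and does not repair the argument: the $\delta$ there would again have to be the conditional quantity $\delta_i$, which is not uniformly small, and even granting it, multiplying over $i$ produces a multiplicative error of the form $\bigl(1+\delta e^{\lambda^2(\beta^2-c^2)/2}\bigr)^n$, which blows up at the optimizing $\lambda$ whenever $\beta\gg c$ (exactly the regime of the application, where $\beta=n$ and $c=\Theta(\log n)$); it cannot yield an additive $2\delta n\beta/c$ term. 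Finally, the obstacle you flag --- that the truncated sum is no longer a martingale --- in fact disappears on $\{0,1\}^n$ with the uniform measure: since each coordinate is binary, $|D_i|=\tfrac12\,|Z_i(\dots,1)-Z_i(\dots,0)|$ depends only on $\omega_1,\dots,\omega_{i-1}$, so the truncation $\tilde D_i=D_i\mathbf{1}[|D_i|\le c]$ is predictable, $(\tilde D_i)$ is an honest martingale difference sequence with $|\tilde D_i|\le c$, and Azuma applies directly. With that observation and the Markov step above, your outline closes into a correct proof; without them, both the truncation analysis and the origin of the $2\delta n\beta/c$ term are gaps.
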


\textbf{With High Probability.} We assume that $n$ (i.e., $|V|$) tends to infinity. We say an event happens with high probability (w.h.p.) when it occurs w.p. $1-o(1)$.
\subsection{Our Contribution}
\textbf{Contribution 1: Stabilization Time and Periodicity.} It is known \cite{poljak1986pre} that the stabilization time in MM on a graph $G$ is in $\mathcal{O}(m)$. However, it was left open whether a similar bound holds for RMM or not. We show that the answer is negative by providing an explicit graph construction and coloring for which the stabilization time of RMM is exponential, in $n$. Furthermore, we investigate the stabilization time when the underlying graph is a cycle $C_n$. We prove the upper bound of $\lceil n/2 \rceil-1$ for MM and $\mathcal{O}(n^2)$ for RMM. For the former we exploit some combinatorial arguments and for the latter we analyze the ``convergence'' time of a corresponding Markov chain. We show that both of these bounds are tight.

A trivial bound on the periodicity of MM is $2^n$. However, Goles and Olivos~\cite{GOLES1980187} proved that its periodicity is one or two, i.e., the process always reaches a fixed coloring or switches between two colorings. While a similar behavior was observed for RMM on some special classes of graphs, cf.~\cite{abdullah2015global}, we prove that this does not apply to the general case. More precisely, we give graph structures and initial colorings for which the periodicity of RMM is exponential.

We also initiate the study of the number of stable colorings. We prove that the number of stable colorings of a cycle $C_n$ is in $\Theta(1)$ for RMM and in $\Theta(\Phi^n)$ for MM, where $\Phi=(1+\sqrt{5})/2$ is the golden ratio. This is another indication how small alterations in the local behavior of a process, such as the tie-breaking rule, can have a substantial impact on the global behavior of the process.

\textbf{Contribution 2: Minimum Size of a Winning Set.} 
We provide some bounds on minimum-size winning sets. In particular, in RMM on a cycle $C_n$, the only winning set is the set of all nodes. In MM on $C_n$, the minimum size of a winning set is equal to $\lfloor n/2\rfloor+1$.


\textbf{Contribution 3: Random Initial Coloring.} The problem of finding the expected ``final'' number of blue nodes starting from a $p$-random coloring has been attacked by previous work (see Section~\ref{prior work}). However, only some loose bounds for special classes of graphs have been provided, which seems to be due the inherent difficulty of the problem. We make some advancements on this front, by answering the question for cycle graphs. (As we explain later, we believe that our techniques can be used to prove similar results for a larger class of graphs, namely the $d$-dimensional torus or more broadly vertex-transitive graphs.) We show that in RMM on $C_n$, the expected final number of blue nodes is equal to $pn$. On the other hand, this is equal to $(2p^2-p^3)n/(1-p+p^2)$ for MM (it was brought to our attention that a similar result was proven in~\cite{mossel2014majority}. However, we believe our proof is more intuitive and more importantly we prove a w.h.p. statement).

\textbf{Contribution 4: Proof Techniques.} One of the main contributions of the present paper is introducing several proof techniques built on Markov chain analysis, counting arguments, potential functions, greedy approaches, martingale processes, and recursive functions, which we believe can be very beneficial for the future work to make advancements on majority based (more generally, threshold based) opinion diffusion models. A fair amount of effort has been put into ensuring that the proofs are accessible by avoiding unnecessary complexities imposed by adding less essential components to the model or the underlying graph structure. This has been our main motive for focusing on two of the most basic models and presenting a big fraction of our results on cycle graphs. We explain how some of our techniques can potentially be utilized to prove similar results in a more general framework.

\textbf{Contribution 5: Experimental Results.} We present the outcomes of several experiments that we have conducted. A subset of these experiments has been designed to merely support and complement our theoretical findings. However, some of the executed experiments let us uncover other interesting characteristics of our models. In particular, we investigate the effect of adding some random edges to the underlying graph structure. This leads to some open problems and conjectures about the connection between graph parameters such as conductance and vertex-transitivity and the process properties such as the stabilization time, which could serve as potential future research directions.

\subsection{Related Work}
\label{prior work}
Numerous opinion diffusion models have been introduced to study how the members of a community form their opinions through social interactions, cf.~\cite{imber2021probabilistic, bara2021predicting}. Among all these models, a considerable amount of attention has been devoted to the study of the majority based models, cf.~\cite{DevilDetails, Auletta_Fanelli_Ferraioli_2019, brill2016pairwise, zehmakan2021majority,amir2023majority}.

\textbf{Stabilization Time and Periodicity.} It was proven~\cite{GOLES1980187} that the periodicity of MM is always one or two. Chistikov et al.~\cite{chistikov2020convergence} showed that it is PSPACE-complete to decide whether the periodicity is one or not for a given coloring of a \emph{directed} graph. Furthermore, it was proven~\cite{poljak1986pre} that the stabilization time of MM is bounded by $\mathcal{O}(m)$. Stronger bounds are known for special classes of graphs. For instance, for a $d$-regular graph with strong conductance the stabilization time is in $\mathcal{O}(\log_d n)$, cf.~\cite{zehmakan2020opinion}. The stabilization properties have also been studied for other majority based models, cf.~\cite{berenbrink2022asynchronous, abdullah2015global,n2020rumor,gartner2020threshold,zehmakan2019spread}. 

\textbf{Minimum Size of a Winning Set.} Motivated from viral marketing where a company aims to trigger a large cascade of further adoptions of its product by convincing a subset of individuals to adopt a positive opinion about its product (e.g., by giving them free samples), the problem of finding the minimum size of a winning set has been studied extensively, cf.~\cite{jeger2019dynamic, auletta2020effectiveness, karia2022hard}. G\"artner and Zehmakan~\cite{gartner2018majority} proved that the minimum size of a winning set in MM on a random $d$-regular graph is almost as large as $n/2$ w.h.p. if $d$ is sufficiently large. Using the expander mixing lemma, it was proven~\cite{zehmakan2020opinion} that this is actually true for all graphs with a certain level of conductance, including random regular graphs and Erd\H{o}s-R\'{e}nyi random graph. For general graphs, it was proven in~\cite{auletta2018reasoning} that every graph has a winning set of size at most $n/2$ under the asynchronous variant of MM. In~\cite{MAIN_PELEG, out2021majority}, the minimum size of a winning set on graph data from real-world social networks was investigated for a variant of MM where the nodes with the highest degrees (called the elites) have a larger ``influence factor'' than others.

Furthermore, the problem of finding the minimum size of a winning set for a given graph $G$ is known to be NP-hard for different majority based models, cf.~\cite{schoenebeck2020limitations, karia2022hard}, and approximation algorithms based on various techniques, such as integer programming formulations~\cite{wilder2017controlling, tao2022hard} and reinforcement learning~\cite{kamarthi2019influence}, have been proposed. For MM and RMM, it was proven~\cite{mishra2002hardness} that this problem cannot be approximated within a factor of $(\log \Delta \log\log \Delta)$, unless P=NP, but there is a polynomial-time $(\log \Delta)$-approximation algorithm, where $\Delta$ is the maximum degree. Chen~\cite{chen2009approximability} proved that the problem is traceable for special classes of graphs such as trees.

\textbf{Random Initial Coloring.} The problem of finding the expected final number of blue nodes in MM and RMM with a $p$-random initial coloring has been studied for different graphs, e.g., random regular graphs~\cite{gartner2018majority}, hypercubes~\cite{balogh2006bootstrap} and preferential attachment graphs~\cite{amin2018phase}. Motivated from applications in certain interacting particle systems such as fluid flow in rocks and dynamics of glasses, this also has been studied extensively when the underlying graph is a $d$-dimensional torus, cf.\cite{balister2010random,gartner2017color,zehmakan2019two}. Gray~\cite{gray1987behavior} studied the problem for cycle graphs where some noise is added to the process. Roughly speaking, the main finding of the aforementioned work is that there are thresholds $p_1$ and $p_2$ so that if $p$ is sufficiently smaller than $p_1$ (similarly larger than $p_2$) then the process reaches the white (resp. blue) coloring and a non-monochromatic configuration if $p$ is in between w.h.p. The main difficulty in this set-up is to determine the values of $p_1$ and $p_2$.

In the last few years, a lot of attention has been given to the study of MM on  Erd\H{o}s-R\'{e}nyi random graph starting from a $p$-random initial coloring. In~\cite{zehmakan2020opinion}, it was proven that when $p$ is ``slightly'' larger than $1/2$, then the process reaches the blue coloring w.h.p. Following up on a conjecture from~\cite{CONJECTURE_convergence_benjamini}, the case of $p=1/2$ also has been studied extensively, cf.~\cite{sah2021majority,chakraborti2021majority,tran2020reaching}.

\section{Stabilization Time and Periodicity}
\label{stabilizarion}
\subsection{Stabilization Time in General Graphs}
As mentioned, it was proven~\cite{poljak1986pre} that the stabilization time of MM is in $\mathcal{O}(m)$. It is easy to argue this bound holds even when the nodes are updated asynchronously or when we have a biased tie-breaking rule (i.e., always blue is chosen in case of a tie). However, it was left open whether a similar bound can be proven for random tie-breaking. We settle this, in Theorem~\ref{exp-stable-rmm}, by providing an explicit graph construction and coloring for which RMM needs exponentially many rounds to stabilize in expectation. (Our proof actually works for any random tie-breaking rule, where a node chooses blue (white) independently w.p. $0<q<1$ (resp. $1-q$) in case of a tie.)

\begin{theorem}
\label{exp-stable-rmm}
There is a graph $G=(V,E)$ and a coloring $\mathcal{C}_0$ for which the stabilization time of RMM is exponential in $n$.
\end{theorem}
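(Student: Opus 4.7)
The plan is to exhibit an explicit graph $G$ together with an initial coloring $\mathcal{C}_0$ such that, along every trajectory of RMM from $\mathcal{C}_0$, the only way the process can enter an absorbing component is via a single round in which $\Theta(n)$ independent coin flips all come up the same way. A geometric waiting-time argument then forces the expected stabilization time to be $2^{\Omega(n)}$.

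The construction I would use has three parts. (i) Two \emph{anchor cliques} $B$ and $W$, each of size $\Theta(n)$, initially blue and white respectively. Using the resilient-set criterion from the preliminaries ($|N_S(v)|>d(v)/2$ under RMM), I would pick their sizes and internal wiring so that $B$ stays blue and $W$ stays white forever, regardless of outside edges and random choices. (ii) A \emph{coin bank} of $k=\Theta(n)$ nodes $v_1,\dots,v_k$, each connected to exactly one node in $B$ and exactly one node in $W$, plus possibly a constant number of other balanced connections, chosen so that in \emph{every} round each $v_i$ sees exactly as many blue neighbors as white ones. By construction, $v_i$'s color in round $t+1$ is then an independent fair coin flip, independent of $v_j$ for $j\ne i$ and of everything else. (iii) A \emph{gate} gadget attached to the coin bank whose color can only flip to the ``blue'' branch of the Markov chain when all $v_1,\dots,v_k$ are simultaneously blue in the previous round, and which, once flipped, cascades into a new resilient blue set that absorbs the $v_i$'s and makes the whole configuration stable. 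Concretely, the gate can be a second large blue-clique-in-waiting that I make marginally unbalanced so that one extra blue vote tips it only when every $v_i$ is blue.

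With the construction in place I would execute four steps. First, verify that $B,W$ (and the initially non-anchored part of the gate) are indeed resilient/quasi-resilient in RMM, i.e., their colors never change along any trajectory until the gate fires. Second, verify that before the gate fires, every $v_i$ is tied in every round, so the coin bank performs an i.i.d.\ uniform refresh each round. Third, argue that the only absorbing component of the induced Markov chain is the one reached through the gate's firing event, so reaching an absorbing component requires at least one round where all $v_i$'s happen to be blue; this event has probability $2^{-k}$ per round. Fourth, combine these to get $\mathbb{E}[\tau]\ge 2^{k}=2^{\Omega(n)}$ by a geometric tail bound. The same argument goes through for any tie-breaking probability $q\in(0,1)$ with $2^{-k}$ replaced by $q^{k}$, matching the remark after the theorem.

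The main obstacle I expect is step (iii): designing a gate that genuinely requires the \emph{conjunction} ``all $v_i$ blue in the same round'' rather than letting some partial combination of blue $v_i$'s together with an already-drifting gate fire prematurely, which would shortcut the exponent from $k$ to something smaller. I would handle this by using a white anchor clique $W'$ adjacent to all $v_i$'s with exactly one extra private white anchor neighbor per $W'$-node, so each $w\in W'$ sees strictly more white than blue neighbors unless every single $v_i$ is blue, at which point the majority tips by one and a cascade (provably monotone because the cascaded nodes only gain blue neighbors in subsequent rounds) locks the whole coin bank to blue. Making the counting slack tight enough to rule out all ``near-miss'' firings while still allowing the clean firing is where the bookkeeping lives; once that gadget is verified, the rest of the argument is routine.
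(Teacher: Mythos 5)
Your high-level mechanism is exactly the one the paper uses: resilient monochromatic anchors keep a bank of $\Theta(n)$ nodes perpetually tied, so each of them flips an independent fair coin every round, and the process can only enter an absorbing component via a round in which all coins agree, which costs $2^{\Theta(n)}$ expected rounds by a geometric argument. However, the proposal has a genuine gap at the one place you yourself flag as the hard part: the gate. As specified, the gate clique $W'$ is adjacent to \emph{all} of $v_1,\dots,v_k$, so each coin node $v_i$ acquires $|W'|=\Theta(n)$ white neighbors while having only $O(1)$ blue ones; it is then not tied at all and deterministically turns white, destroying the i.i.d.\ coin-flip property on which the entire lower bound rests. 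Repairing this requires giving each $v_i$ another $\Theta(n)$ blue neighbors, which in turn changes the counting for when the gate fires and for whether $B$ remains resilient; none of that bookkeeping is done. You also assert, but do not establish, that no absorbing component is reachable without the gate firing --- with a multi-node gate this needs an argument that partial or transient gate activations (the coin bank re-randomizes in the very round the gate flips) cannot cascade into some other stable or periodic configuration.

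The paper's construction resolves both difficulties simultaneously by collapsing your three gadgets into two stars and making the gate \emph{be} the blue anchor's hub. The coin bank $I$ consists of $\kappa=\lfloor n/3\rfloor-1$ nodes of degree exactly $2$, each adjacent only to the center $v_b$ of a blue star with $\kappa-1$ leaves and the center $v_w$ of a larger white star (which is resilient). As long as $v_b$ is blue and $v_w$ is white, every coin node is tied, full stop --- there is no third neighborhood to balance. The node $v_b$ sees $\kappa-1$ blue leaves and the $\kappa$ coin nodes, so its majority tips to white exactly when \emph{all} $\kappa$ coins come up white (with $j<\kappa$ white coins it sees $2\kappa-1-j>j$ blue neighbors), and once $v_b$ flips the whole graph provably collapses to the all-white coloring, which is the unique absorbing state reachable here. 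The set of colorings with $S_b$ blue, $S_w$ white, and at least one coin blue is closed except for the probability-$2^{-\kappa}$ all-white event, giving the $2^{\lfloor n/3\rfloor-1}$ lower bound cleanly. I would encourage you to either adopt this merged anchor-as-gate design or carry out the full degree bookkeeping for your separate-gate version before claiming the result.
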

\begin{proof}
To provide the construction of graph $G$, we first define three smaller graphs and then explain how to connect these graphs to create $G$. We define $\kappa := \lfloor n/3 \rfloor-1$. Let $S_b$ be a star graph with an internal node $v_b$ and $\kappa-1$ leaves and $S_w$ be a star graph with an internal node $v_w$ and $n-2\kappa-1$ leaves. Furthermore, let $I$ be the graph built of $\kappa$ isolated nodes. Now to build graph $G$, for each node in $I$ we add an edge to $v_b$ and an edge to $v_w$. (Note that the total number of nodes is equal to $|V_{S_b}|+|V_{S_w}|+|V_I|=\kappa+(n-2\kappa)+\kappa=n$.) Please see Figure~\ref{figure} (left) for an example.

\begin{figure}[t]
  \centering
  \includegraphics[width=1\linewidth]{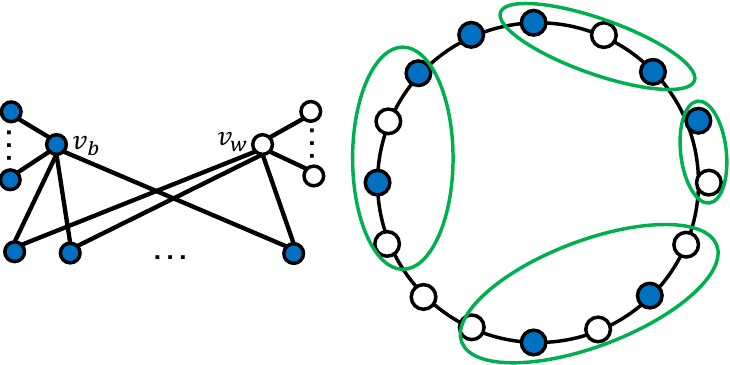}
  \caption{(left) The construction given in Theorem~\ref{exp-stable-rmm} for exponential stabilization time in RMM. (right) The set of ``extended'' maximal alternating paths $\mathcal{A}^{+}$ are enclosed with green curves, see proof of Theorem~\ref{rmm-stabilization-thm}.}
  \label{figure}
\end{figure}

\textbf{Claim 1.} \textit{The nodes in $S_w$ form a resilient set.} Each node in $S_w$ has more than half of its neighbors in $S_w$. This is trivial for all the leaf nodes. The internal node $v_w$ is adjacent to $n-2\kappa-1$ leaves in $S_w$ and $\kappa$ nodes in $I$ and we have $n-2\kappa-1 > \kappa$.

\textbf{Claim 2.} \textit{Let $\mathcal{U}$ be the set of colorings where $S_w$ is white, $S_b$ is blue, and at least one node in $I$ is blue. For a coloring $\mathcal{C}\in \mathcal{U}$, in the next round, all nodes in $S_b$ and $S_w$ keep their color and each node in $I$ chooses a color uniformly at random.} All nodes in $S_w$ remain white according to Claim 1. All leaves in $S_b$ have exactly one neighbor which is blue; thus, they remain blue. Node $v_b$ is of degree $2\kappa-1$ and has at least $\kappa$ blue neighbors, thus it remains blue too. Each node in $I$ has exactly one blue neighbor ($v_b$) and one white neighbor ($v_w$), thus it chooses among blue and white uniformly at random.

Assume that in $\mathcal{C}_0$, all nodes in $S_w$ are white and the rest of the nodes are blue. $\mathcal{C}_0$ is clearly in $\mathcal{U}$. We show that the process eventually reaches the white coloring. Hence, the stabilization time is upper-bounded by the expected number of rounds we need to reach a coloring not in $\mathcal{U}$ (because the white coloring obviously is not in $\mathcal{U}$). Note that from a coloring in $\mathcal{U}$, if at least one node in $I$ selects blue, we are still in $\mathcal{U}$ in the next round, according to Claim 2. The only way to leave $\mathcal{U}$ is that all nodes in $I$ select white. Since this happens only w.p. $1/2^{\kappa}$, it takes $2^{\kappa}=2^{\lfloor n/3\rfloor-1}$ rounds in expectation for it to happen.

It remains to prove that the process eventually reaches the white coloring. Note that according to Claim 1, $S_w$ remains white forever. Thus, it suffices to prove that from any coloring where $S_w$ is fully white, there is a non-zero probability to reach the white coloring. Let $\mathcal{C}$ be such a coloring. There is a non-zero probability that all nodes in $I$
 become white in the next round (since they all have at least one white neighbor, namely $v_w$). It is possible that in the round after all nodes in $I$ remain white and $v_b$ becomes white (recall $d(v_b)=2\kappa-1$). One round after that, all nodes will be white.

\end{proof}

\subsection{Stabilization Time in Cycles}

We prove that on a cycle $C_n$ the stabilization time is at most $\lceil n/2\rceil-1$ for MM (Theorems~\ref{mm-stabilization-thm}) and in $\mathcal{O}(n^2)$ for RMM (Theorem~\ref{rmm-stabilization-thm}). It is straightforward to infer Theorem~\ref{mm-stabilization-thm} from Lemma~\ref{alternating-lemma}, given below. However, for the sake of completeness, we provide a proof for Theorem~\ref{mm-stabilization-thm} in the appendix, Section~\ref{mm-stabilization-thm-appendix}. Furthermore, to prove Theorem~\ref{rmm-stabilization-thm}, we rely on the Markov chain analysis given in Lemma~\ref{markov-chain-lemma} whose full proof is presented in the appendix, Section~\ref{markov-chain-lemma-appendix}.

\begin{lemma}
\label{alternating-lemma}
In MM on a cycle $C_n$ with a coloring $\mathcal{C}$, if there exist two adjacent nodes with the same color, the process reaches a stable coloring after exactly $\lceil l/2 \rceil$ rounds, where $l$ is the length of the longest alternating path in the path partition of $\mathcal{C}$.
\end{lemma}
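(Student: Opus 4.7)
The plan is to exploit the fact that on a cycle every node has degree exactly $2$, so the MM update rule simplifies drastically: a node $v$ changes color in one round if and only if both of its neighbors carry the color opposite to $v$; if the two neighbors agree with $v$ it trivially stays, and if they disagree with each other the tie-breaking rule also forces it to stay. With this in hand, the whole proof reduces to tracking the path partition over time.

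The first step is to show that every maximal monochromatic path of length $\geq 2$ is invariant: interior nodes have two same-colored neighbors, while each endpoint sees one same-colored neighbor (inside the monochromatic path) and one opposite-colored neighbor (from the adjacent alternating path), producing a tie. So monochromatic segments never shrink, and, as a byproduct, distinct alternating paths never interact — each evolves independently inside its own pair of monochromatic buffers.

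The second and central step is to analyze a single maximal alternating path $v_1,\dots,v_k$ with boundary nodes $v_0, v_{k+1}$ that, by definition of the path partition, have colors opposite to $v_1$ and $v_k$ respectively. A node-by-node check shows that for every $1\le i\le k$ the two neighbors of $v_i$ carry the same color (opposite to $v_i$): for $2\le i\le k-1$ this is the alternation itself, and at the endpoints it follows because $v_0$ is opposite to $v_1$ by definition and also opposite to $v_2$ (since $v_1,v_2$ alternate), and symmetrically on the right. Hence in one round every node of the alternating path flips. After this flip, $v_1$ matches the unchanged $v_0$ and $v_k$ matches $v_{k+1}$, so both are absorbed into the enlarged adjacent monochromatic segments, while $v_2,\dots,v_{k-1}$ still alternate (they all flipped in parallel) and constitute a new alternating path of length $k-2$. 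The boundary case $k=1$ is handled separately and reduces to length $0$ in a single round, consistent with $\lceil 1/2\rceil = 1$.

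The third step is to iterate: since each alternating path shrinks by $2$ per round and the surrounding monochromatic buffers never shrink, an alternating path of initial length $k$ disappears in exactly $\lceil k/2\rceil$ rounds. Taking the maximum over the path partition, the coloring is purely monochromatic — hence stable by the first step — after exactly $\lceil l/2\rceil$ rounds. For the matching lower bound, after $\lceil l/2\rceil - 1$ rounds the longest alternating path still has length $2$ (if $l$ is even) or $1$ (if $l$ is odd), and in either case the flip analysis above forces at least one node to change in the following round, so the process has not yet stabilized. The main obstacle is the careful boundary bookkeeping: one must simultaneously verify that the endpoints of every monochromatic segment remain fixed (a tie), that the endpoints of every alternating path do get absorbed (two same-colored neighbors), and that the interior alternation is preserved after a synchronous flip; once these three simple observations are nailed down, the tight count $\lceil l/2\rceil$ is essentially forced.
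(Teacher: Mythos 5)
Your proof is correct and follows essentially the same route as the paper's: monochromatic paths of length at least two are invariant, each maximal alternating path shrinks by two per round as its endpoints are absorbed, and the count $\lceil l/2\rceil$ follows; you are in fact somewhat more explicit than the paper about the endpoint/tie bookkeeping and the matching lower bound. One tiny wording slip: in the endpoint analysis you should say that $v_2$ is opposite to $v_1$ (so $v_0$ and $v_2$ share the color opposite to $v_1$), not that $v_0$ is opposite to $v_2$ --- the conclusion you draw is the right one.
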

\begin{proof}
Let $B$ (resp. $W$) be the set of nodes on the (maximal) blue (resp. white) paths in the path partition in $\mathcal{C}$. All nodes in $B$ and $W$ keep their color forever. Furthermore, all alternating paths in the path partition keep shrinking until they disappear. Consider an alternating path $v_1,\cdots, v_k$. After one round, $v_1$ and $v_k$ ``join'' the adjacent monochromatic paths and thus it shrinks to the alternating path $v_2,\cdots, v_{k-1}$, which is of length $k-2$. If $k$ is even, the path disappears after $k/2=\lceil k/2 \rceil$ rounds. If $k$ is odd, its length decreases by two in each round until it is of length $1$. Then, it needs one more round to disappear. This is equal to $\lceil k/2\rceil$ rounds overall. Therefore, after $\lceil l/2 \rceil$ rounds all nodes are on a monochromatic path of length at least two and will never change their color.
\end{proof}

\begin{theorem}
\label{mm-stabilization-thm}
The stabilization time of MM on a cycle $C_n$ is at most $\lceil n/2\rceil -1$ and this bound is tight.
\end{theorem}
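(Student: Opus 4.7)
The plan is to derive the upper bound directly from Lemma~\ref{alternating-lemma} and then to exhibit explicit tight examples whose shape depends on the parity of $n$.

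For the upper bound, I would first dispose of the degenerate case in which $n$ is even and the initial coloring is one of the two alternating colorings: here the process immediately enters the length-$2$ cycle between the two alternating colorings, so the stabilization time is $0 \leq \lceil n/2 \rceil - 1$. Otherwise the initial coloring contains two adjacent nodes of the same color, so Lemma~\ref{alternating-lemma} applies and the stabilization time equals $\lceil l/2 \rceil$, where $l$ is the length of the longest alternating path in the path partition. Because the path partition must then contain at least one monochromatic path of length $\geq 2$, the alternating paths together use at most $n-2$ nodes. Hence $l \leq n-2$ and $\lceil l/2 \rceil \leq \lceil (n-2)/2 \rceil = \lceil n/2 \rceil - 1$.

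For tightness I would construct an initial coloring achieving $l = n-2$ when $n$ is odd and $l = n-3$ when $n$ is even; a short computation gives $\lceil l/2 \rceil = \lceil n/2 \rceil - 1$ in both parities. Labeling the cycle $v_0, v_1, \ldots, v_{n-1}$, the odd-$n$ construction sets $v_0, v_1$ blue and, for $2 \leq k \leq n-1$, colors $v_k$ white when $k$ is even and blue when $k$ is odd; this produces a blue monochromatic path $v_0 v_1$ of length $2$ and a single alternating path $v_2 \cdots v_{n-1}$ of length $n-2$. The even-$n$ construction instead sets $v_0, v_1, v_2$ blue and colors $v_3, \ldots, v_{n-1}$ alternately (white, blue, white, $\ldots$), yielding a blue monochromatic path of length $3$ and an alternating path of length $n-3$. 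In each case Lemma~\ref{alternating-lemma} delivers a stabilization time of exactly $\lceil n/2 \rceil - 1$.

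The main subtlety---and the reason two constructions are needed---is a parity obstruction: when $n$ is even, an alternating path of length $n-2$ in the path partition cannot exist. Its two endpoints would have to differ in color from the sole monochromatic path of length $2$ filling the remaining two nodes, yet in an even-length alternating path the two endpoints have the same color, so one of them would necessarily match the monochromatic path's color and be absorbed into it. Once this observation is in place, the even-$n$ construction simply concedes one extra node to the monochromatic side, and both halves of the statement fall out of Lemma~\ref{alternating-lemma}.
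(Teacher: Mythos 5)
Your proof is correct and follows essentially the same route as the paper's: dispose of the alternating-coloring case, apply Lemma~\ref{alternating-lemma} with $l\le n-2$ for the upper bound, and realize $l=n-2$ (odd $n$) or $l=n-3$ (even $n$) with a length-$2$ or length-$3$ monochromatic path plus one long alternating path for tightness. One small slip in your explanatory aside: an alternating path of \emph{even} length (even number of nodes) has endpoints of \emph{opposite} colors, not the same color; the correct obstruction is that both endpoints of the alternating path must avoid the color of the adjacent length-$2$ monochromatic path, hence must agree with each other, which forces the alternating path to have odd length, contradicting $n-2$ being even. This does not affect the validity of your main argument, since the remark is not needed for either the upper bound or the constructions.
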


\begin{lemma}
\label{markov-chain-lemma}
Consider the time-homogenous Markov chain which is defined over the state set $S:=\{s_0,\cdots,s_k\}$ with the transition matrix $P:=(p_{s_i,s_j})_{s_i,s_j\in S}$, where for $1 \le i \le k-1$ we have $p_{s_i,s_i}=\frac{1}{2}$ and $p_{s_i,s_{i+1}}=p_{s_i,s_{i-1}}=\frac{1}{4}$ and for $i=0,k$ we have $p_{s_i,s_i}=1$. The expected number of rounds it needs to reach from a state $s_i$ to $s_0$ or $s_k$ is equal to $2i(k-i)$.
\end{lemma}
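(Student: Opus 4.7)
The plan is to proceed by first-step analysis on the chain. Let $h_i$ denote the expected number of rounds to reach $\{s_0, s_k\}$ starting from $s_i$. Since the chain restricted to the transient states $\{s_1,\ldots,s_{k-1}\}$ has a strictly positive probability of stepping toward an absorbing state at every move, the hitting times are almost surely finite and the $h_i$ are well-defined. The boundary conditions are $h_0 = h_k = 0$, and for $1 \le i \le k-1$, conditioning on the first transition gives
\[
h_i \;=\; 1 + \tfrac{1}{2} h_i + \tfrac{1}{4} h_{i-1} + \tfrac{1}{4} h_{i+1}.
\]

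After clearing the self-loop term, this simplifies to the second-order linear recurrence
\[
h_{i+1} - 2 h_i + h_{i-1} \;=\; -4, \qquad 1 \le i \le k-1,
\]
with $h_0 = h_k = 0$. I would then solve this directly. The homogeneous solution is $A + Bi$, and a particular solution of the form $C i^2$ yields $2C = -4$, so $C = -2$. Thus $h_i = -2i^2 + A + Bi$. Plugging in $h_0 = 0$ gives $A = 0$, and $h_k = 0$ gives $B = 2k$. Hence $h_i = 2ki - 2i^2 = 2i(k-i)$, as desired.

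Alternatively, I would just verify the claimed formula by substitution, which avoids invoking the general theory of linear recurrences: plugging $h_i = 2i(k-i)$ into $h_i = 2 + \tfrac{1}{2}(h_{i-1} + h_{i+1})$, one computes
\[
h_{i-1} + h_{i+1} \;=\; 2(i-1)(k-i+1) + 2(i+1)(k-i-1) \;=\; 4 i(k-i) - 4,
\]
so $2 + \tfrac{1}{2}(h_{i-1} + h_{i+1}) = 2i(k-i) = h_i$, and the boundary conditions are immediate.

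The main (minor) obstacle is justifying that the recurrence has a \emph{unique} solution, so that verifying the formula suffices. This follows because the chain has only two absorbing states and every transient state communicates with them, which forces the expected hitting times to be finite and uniquely determined by the first-step equations together with the boundary conditions $h_0 = h_k = 0$. Everything else is a routine calculation.
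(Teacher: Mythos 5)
Your proposal is correct and follows essentially the same route as the paper: first-step analysis yields the boundary-value recurrence $T_0=T_k=0$, $\tfrac12 T_{i+1}-T_i+\tfrac12 T_{i-1}=-2$, which is solved with the homogeneous part $A+Bi$ and the particular solution $-2i^2$. Your added remarks on uniqueness of the solution and the direct verification by substitution are sound but do not change the argument, which matches the paper's appendix proof.
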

\textsc{Proof Sketch.}
Let $T_i$ be the expected number of rounds the Markov chain needs to reach from state $s_i$ to state $s_0$ or $s_k$. Obviously, we have $T_0=T_k=0$. Furthermore, from state $s_i$, for $1\le i\le k-1$, if we move to state $s_{i+1}$ w.p. $1/4$, then in addition to this step we need in expectation $T_{i+1}$ steps to finish. A similar argument applies to the transition to $s_{i-1}$ and remaining in state $s_i$, which happen w.p. $1/4$ and $1/2$ respectively. Thus, conditioning on these three possibilities we conclude that $T_i=\frac{1}{4}T_{i-1}+\frac{1}{4}T_{i+1}+\frac{1}{2}T_{i}+1$ for $1\leq i \leq k-1$.
Solving this linear recursion gives us $T_i=2i(k-i)$. (Please see Section~\ref{markov-chain-lemma-appendix} for a full proof.) \qed

\begin{theorem}
\label{rmm-stabilization-thm}
The stabilization time of RMM on $C_n$ is in $\mathcal{O}(n^2)$.
\end{theorem}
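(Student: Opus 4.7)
My plan is to reduce the RMM dynamics on $C_n$ to the lazy symmetric random walk analyzed in Lemma~\ref{markov-chain-lemma}. Starting from the path partition of an arbitrary coloring, I would introduce the \emph{extended maximal alternating paths} $\mathcal{A}^+$ illustrated in Figure~\ref{figure} (right): each such path consists of an alternating path from the partition together with the two endpoint nodes of the adjacent maximal monochromatic blocks. The key observation is that every node outside $\mathcal{A}^+$ has both of its neighbors of its own color and therefore stays fixed under RMM for the rest of the process; inside an extended alternating path, the interior alternating nodes see two opposite-colored neighbors and flip deterministically, while only the two endpoint nodes face a tie and randomize.

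Next I would analyze a single extended alternating path by a four-case calculation over the two independent coin flips at its endpoints. Recomputing the path partition after one step shows that the length $l$ of the alternating portion transitions to $l-2$ when both endpoint nodes keep their color (reproducing the deterministic contraction of Lemma~\ref{alternating-lemma}), to $l$ when exactly one flips (the alternating portion simply shifts by one position), and to $l+2$ when both flip (the alternating pattern extends into both adjacent blocks), each with probability $1/4$. After rescaling by a factor of two this is exactly the Markov chain of Lemma~\ref{markov-chain-lemma} with $p_{i,i}=1/2$ and $p_{i,i\pm 1}=1/4$, whose absorbing states correspond either to the alternating portion dissolving ($l=0$) or to its growing to fill the entire locally available region. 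The lemma then bounds the expected absorption time for a single path by $2(l/2)(R/2 - l/2) \le R^2/8 = O(n^2)$ where $R \le n$ is the size of the local region.

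The main obstacle is lifting this single-path analysis to configurations with several coexisting extended alternating paths, which can either merge (when a separating monochromatic block is fully consumed) or spawn a new length-$2$ alternating segment (when the two boundary nodes between two non-singleton blocks of opposite colors simultaneously flip). To address this I would track the global quadratic potential $\Phi(\mathcal{C}) := \sum_{A \in \mathcal{A}^+(\mathcal{C})} l_A (R_A - l_A)$, where $R_A$ is defined via a partition of the cycle into arcs (one per alternating path) so that $\sum_A R_A \le n$ and hence $\Phi(\mathcal{C}) = O(n^2)$. The identity $\tfrac{1}{4}T_{i-1} + \tfrac{1}{2}T_i + \tfrac{1}{4}T_{i+1} = T_i - 1$ underlying Lemma~\ref{markov-chain-lemma} implies that each non-absorbed path contributes a constant expected drop to $\Phi$ per round; the merge and spawn events can be shown to contribute only $O(n^2)$ total amortized extra budget across the entire run. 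An optional-stopping argument then yields $\mathbb{E}[\tau] = O(n^2)$ for the stabilization time $\tau$.
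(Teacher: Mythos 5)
Your setup is the same as the paper's: the extended alternating paths $\mathcal{A}^{+}$, the observation that interior alternating nodes flip deterministically while only the two attached endpoints face a tie, and the four-case analysis of the two independent coin flips yielding the lazy $(1/4,1/2,1/4)$ walk of Lemma~\ref{markov-chain-lemma}. The paper even uses the same ``process one path at a time'' (lazy) viewpoint. The divergence, and the gap, is in \emph{which} quantity you feed into the random walk. You track the length $l_A$ of each individual alternating path and then try to control the interactions (merges, spawns) with the potential $\Phi=\sum_A l_A(R_A-l_A)$ and an unproven claim that merge/spawn events cost only $\mathcal{O}(n^2)$ amortized. This claim is where the argument breaks. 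A boundary between two opposite-colored monochromatic blocks is an alternating path of length $0$, and $l=0$ is \emph{not} absorbing there: with probability $1/4$ both boundary nodes flip and a length-$2$ alternating segment spawns, adding roughly $2R_A$ to $\Phi$ in one step. Such boundaries persist (the segment can shrink back to $0$ and respawn indefinitely), so your ``constant expected drop per active path'' does not translate into negative drift for $\Phi$. A concrete counterexample to the structure of the argument: start from $B^{n/2}W^{n/2}$. Then $\Phi(\mathcal{C}_0)=0$, yet the stabilization time is $\Theta(n^2)$; optional stopping applied to this potential can therefore prove nothing, and the entire $\Theta(n^2)$ bound would have to come out of the spawn budget, for which you give no argument. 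With many coexisting boundaries one can even arrange positive expected drift for $\Phi$ per round.

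The paper's fix is to apply the single-path four-case computation not to the path length but to the \emph{global number of blue nodes}: updating one path of $\mathcal{A}^{+}$ changes the total blue count by $+1$ w.p.\ $1/4$, by $-1$ w.p.\ $1/4$, and by $0$ w.p.\ $1/2$, regardless of whether paths later merge, shift, or spawn. The blue count is then exactly the chain of Lemma~\ref{markov-chain-lemma} on $\{0,\dots,n\}$ (as long as $\mathcal{A}^{+}\neq\emptyset$), it can only stop at $0$, $n$, or earlier at the blinking configuration, and $2i(n-i)\le n^2/2$ gives the bound with no amortization needed. If you want to rescue your approach, you would need either to redefine the potential so that opposite-color block boundaries already carry their full $\Theta(R_A)$ charge at $l=0$ (at which point you are essentially reconstructing a quadratic function of the blue count), or to switch to the global blue-count martingale directly.
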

\begin{proof}
Let us first introduce \textit{lazy} RMM on $C_n$ which is basically a slower version of RMM. For a coloring $\mathcal{C}$, consider all the maximal monochromatic paths of length at least 2 on $C_n$, and let $\mathcal{A}$ denote the set of maximal alternating paths which sit between two such monochromatic paths. This includes alternating paths of length 0, when two monochromatic paths with opposite colors are adjacent. (This is essentially the set of alternating paths in the path partition in $\mathcal{C}$ plus the mentioned path of length 0.) Define $\mathcal{A}^{+}$ to be the set of paths obtained by taking each path from $\mathcal{A}$ and attaching its two adjacent nodes to it. (See Figure~\ref{figure} (right) for an example.) In the lazy RMM instead of updating all nodes at once, we pick up the paths in $\mathcal{A}^{+}$ one by one (in an arbitrary order) and then update the color of all nodes on the picked path at once following the RMM rule. Once we have exhausted $\mathcal{A}^{+}$, we regenerate $\mathcal{A}^{+}$ for the new coloring and continue. However, note that we do not actually bring the updated colors to effect until we have gone through all paths in $\mathcal{A}^+$. You can imagine that we keep the updated color for each node in a buffer and then it comes to effect once $\mathcal{A}^+$ is empty.

Note that every two paths in $\mathcal{A}^+$ are disjoint (because we considered the monochromatic paths of length at least two). Furthermore, each node not on any path in $\mathcal{A}^+$ will not change its color in RMM since it has the same color as both its neighbors. Thus, the coloring which is generated after processing all elements of $\mathcal{A}^{+}$ is the same as the coloring which would have been outputted had we applied RMM instead (of course, assuming the same source of randomness, i.e., a node makes the same random choice in both processes in case of a tie). Moreover, the lazy RMM stops when the process reaches a coloring where $\mathcal{A}^+$ is empty. This means the process has reached a monochromatic/blinking configuration, which is equivalent to stabilization in RMM, as we prove formally in Theorem~\ref{cycle-periodicity-mm}. In short, the lazy RMM is just a slower version of RMM, where we break a round into smaller sub-rounds. Thus, it suffices to prove our desired upper-bound of $\mathcal{O}(n^2)$ for the lazy RMM.

Let $P := v_1, \cdots, v_k$ be a path in $\mathcal{A}^+$. We claim that after updating the nodes on $P$, the number of blue nodes increases (decreases) by 1 w.p. $1/4$ and remains the same w.p. $1/2$. First consider the case of even $k$. Since the original alternating path $v_2,\cdots, v_{k-1}$ is of even length, the adjacent monochromatic paths containing $v_1$ and $v_k$ must be of opposite colors. Without loss of generality, assume that $v_1$ is blue and $v_k$ is white. Thus for $2\le i\le k-1$, $v_i$ is white for even $i$ and blue for odd $i$. Overall, there are $k/2$ blue nodes before the update. After the update: (i) each node $v_i$, for $2\le i\le k-1$, deterministically switches its color, which gives $(k-2)/2$ blue nodes (ii) $v_1$ and $v_k$ choose a color uniformly and independently at random. They both choose blue (white) w.p. $1/4$, which gives $(k-2)/2+2=k/2+1$ (resp. $(k-2)/2=k/2-1$) blue nodes, i.e., an increase (resp. decrease) by one in the number of blue nodes. Furthermore, one of them chooses blue and the other one chooses white w.p. $1/2$ which gives $(k-2)/2+1=k/2$ blue nodes, i.e., no change. We can prove the same statement for the case of odd $k$ by applying a very similar argument.


Consider the Markov chain described in Lemma~\ref{markov-chain-lemma} for $k=n$, where state $s_i$ represents having $i$ blue nodes. We claim that the maximum number of rounds this Markov chain needs to reach $s_0$ or $s_n$, in expectation, is an upper bound on the stabilization time of the lazy RMM process. As we discussed in each round of the lazy RMM, the number of blue nodes decreases/increases by 1 w.p. $1/4$ and remains the same w.p. $1/2$. For odd $n$, if the process has not reached the white or blue coloring (corresponding to state $s_0$ and $s_n$ in the Markov chain), the set $\mathcal{A}^+$ is non-empty. Thus the Markov chain actually models the lazy RMM precisely. When $n$ is even, it is possible that we reach a coloring where $\mathcal{A}^+$ is empty but we are not in the blue or white coloring (this happens if the process reaches the blinking configuration, where the corresponding Markov chain is in the state $s_{n/2}$). However, as we are looking for an upper bound, this is not an issue. Hence, starting from a coloring with $i$ blue nodes, the stabilization time is bounded by $2i(n-i)$ rounds. Since $2i(n-i)$ is maximized for $i=n/2$, this is at most $n^2/2=\mathcal{O}(n^2)$.

\end{proof}

The quadratic bound given in Theorem~\ref{rmm-stabilization-thm} is tight. In the appendix, Section~\ref{rmm-stabilization-thm-appendix}, we prove that if we start from a coloring which partitions the node set into a blue path and an alternating path (both of size almost $n/2$) then the process needs $\Omega(n^2)$ rounds in expectation to stabilize.

\subsection{Periodicity in General Graphs and Cycles}
A trivial upper bound on the periodicity of MM and RMM is $2^n$. It was proven~\cite{GOLES1980187} that the periodicity of MM is always 1 or 2. Theorem~\ref{thm-exp-period} states that for RMM the trivial bound of $2^n$ is actually the best possible, up to some constant factor. On the other hand, if we limit ourselves to the cycle graphs, then the periodicity for both RMM and MM is always one or two, see Theorems~\ref{cycle-periodicity-mm}.

\begin{theorem}
\label{thm-exp-period}
For any integer $n$, there is an $n$-node graph $G$ for which the periodicity of RMM is in $\Omega(2^n)$.
\end{theorem}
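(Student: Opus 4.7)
The plan is to modify the construction used in Theorem~\ref{exp-stable-rmm} so that \emph{both} stars become resilient (in the earlier construction only $S_w$ was), and to then exhibit a single absorbing component of exponential size in the RMM Markov chain. Set $\kappa := \lfloor (n-4)/3 \rfloor$ and let $S_b$ be a star with center $v_b$ and $\kappa+1$ leaves, $S_w$ a star with center $v_w$ and $\kappa+1$ leaves, and $I$ a set of $\kappa$ additional nodes; connect each node of $I$ by an edge to both $v_b$ and $v_w$ (and pad with at most a constant number of extra leaves on one side to handle divisibility). The resulting graph is connected and has $n$ nodes; it mirrors the picture in Figure~\ref{figure} except that now the blue side is slightly enlarged.

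The first key step is to verify that, for the RMM tie-breaking rule, $S_b$ (all blue) and $S_w$ (all white) are both resilient sets. Leaves are trivial since their unique neighbor is the same-colored center. For $v_w$, the neighborhood contains $\kappa+1$ white leaves and the $\kappa$ nodes of $I$; in the worst case where every $I$-node is blue, $\kappa+1 > \kappa$ still gives $v_w$ a strict white majority, so $v_w$ remains white regardless of the colors chosen in $I$ and of all random tie-breaks. The argument for $v_b$ is identical by symmetry. Hence, once the two stars are colored as above, they stay that way forever.

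Now let $\mathcal{U}$ be the set of colorings $\mathcal{C}$ with $\mathcal{C}|_{S_b}=b$ and $\mathcal{C}|_{S_w}=w$; note $|\mathcal{U}|=2^{\kappa}$. I claim $\mathcal{U}$ forms a single absorbing strongly connected component of the RMM Markov chain. Closedness follows from resilience: after one RMM step, the image of any coloring in $\mathcal{U}$ is again in $\mathcal{U}$. Strong connectivity follows from the observation that every $u\in I$ has exactly two neighbors, $v_b$ and $v_w$, with opposite colors throughout $\mathcal{U}$; hence $u$ is always tied and re-colors itself blue or white each round uniformly at random, independently across the nodes of $I$. Consequently, from any $\mathcal{C}\in\mathcal{U}$, every target $\mathcal{C}'\in\mathcal{U}$ is reached in a single step with probability $2^{-\kappa}>0$. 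Starting the process from any coloring in $\mathcal{U}$ (for instance the one where $I$ is fully blue), the periodicity equals $|\mathcal{U}|=2^{\kappa}=2^{\Omega(n)}$, establishing the claim.

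The main obstacle I anticipate is purely bookkeeping: arranging the sizes of $S_b$, $S_w$, and $I$ so that each center strictly out-votes the ``free'' $I$-neighbors in the worst case, while still leaving a linear-sized $I$. Once the counting ``$\kappa+1 > \kappa$'' is in place, both closedness of $\mathcal{U}$ under one RMM step and its strong connectivity are immediate from the permanent tie at each $I$-node, and the periodicity lower bound reduces to counting $|\mathcal{U}|=2^{\kappa}$.
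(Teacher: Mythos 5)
Your construction is genuinely different from the paper's, and the logic of what you prove is sound. The paper anchors a path $P=v_0,\dots,v_{\kappa-1}$ (with $\kappa\ge n-10$) between two constant-size cliques $C_w$ (size $3$) and $C_b$ (size $n-3-\kappa\le 7$), and then has to work for strong connectivity of $\mathcal{U}$: it routes every coloring of $P$ through an intermediate coloring $\mathcal{C}_M$ over many rounds. Your version replaces the path by an independent set $I$ whose every node sees a permanent one-blue/one-white tie, so the whole component mixes in a single step with probability $2^{-\kappa}$; the resilience check $\kappa+1>\kappa$ and the closedness of $\mathcal{U}$ are exactly as you say. This is a cleaner strong-connectivity argument than the paper's.

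However, there is a quantitative gap against the literal statement. The theorem claims periodicity $\Omega(2^n)$, i.e., at least $c\cdot 2^n$; your component has size $2^{\lfloor(n-4)/3\rfloor}\approx 2^{n/3}$, which is $2^{\Theta(n)}$ but not $\Omega(2^n)$. Moreover, this loss is intrinsic to your topology, not just bookkeeping: for each $I$-node to be permanently tied, its tie must be against nodes of fixed color, so the free nodes can only attach to the anchors; each center is then adjacent to all $|I|$ free nodes and, to be resilient, needs strictly more than $|I|$ same-colored leaves, forcing each star to be at least as large as $I$ and capping the free set at about $n/3$ nodes. The paper escapes this by letting the free nodes form a path that touches the anchors only at its two endpoints, so the anchors stay constant-sized and $\kappa=n-O(1)$; the price is precisely the multi-round connectivity argument you avoided. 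So either accept the weaker (still exponential) bound $2^{\Omega(n)}$, or switch to a construction in which the free region meets the resilient anchors in only $O(1)$ nodes.
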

\textsc{Proof Sketch.}
Define $\kappa$ to be the largest integer smaller than $n-6$ which is divisible by 4. Consider a path $P:= v_0,\cdots, v_{\kappa-1}$, a clique $C_w$ of size 3, and a clique $C_b$ of size $n-3-\kappa$. To build graph $G$, add an edge between $v_0$ and a node in $C_w$ and an edge between $v_{\kappa-1}$ and a node in $C_b$. Let $\mathcal{U}$ be the set of all colorings where $C_w$ is fully white and $C_b$ is fully blue. Note that $|\mathcal{U}|=2^{\kappa}=\Omega(2^n)$. We can prove that for every two colorings $\mathcal{C},\mathcal{C}'\in \mathcal{U}$, there is a non-zero probability to reach from $\mathcal{C}$ to $\mathcal{C}'$ and there is no transition possible from a coloring in $\mathcal{U}$ to a coloring outside $\mathcal{U}$. Thus, the colorings in $\mathcal{U}$ form an absorbing strongly connected component, which yields the bound of $\Omega(2^n)$ on the periodicity. Please refer to the appendix, Section~\ref{thm-exp-period-appendix}, for a full proof.
\qed

\begin{theorem}
\label{cycle-periodicity-mm}
In MM on a cycle $C_n$:
\begin{itemize}
  \item If $n$ is odd, the process always reaches a stable coloring.
  \item If $n$ is even, the process reaches a stable coloring or the blinking configuration.
\end{itemize}
In RMM on a cycle $C_n$:
\begin{itemize}
  \item If $n$ is odd, the process always reaches the white (blue) coloring.
  \item If $n$ is even, the process reaches the white (blue) coloring or the blinking configuration.
\end{itemize}
\end{theorem}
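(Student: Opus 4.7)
My plan is to split the proof along the two models and, within each, along the parity of $n$.

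For MM, I invoke Lemma~\ref{alternating-lemma}, which guarantees that whenever the coloring has two adjacent same-color nodes, the MM process reaches a stable coloring after a finite number of rounds. When $n$ is odd, $C_n$ is not bipartite, so no alternating coloring exists; every coloring then has two adjacent same-color nodes and MM reaches a stable coloring. When $n$ is even, either the coloring is one of the two alternating colorings (in which case every node has both neighbors of the opposite color and so deterministically switches, producing the other alternating coloring and hence the blinking configuration), or it has two adjacent same-color nodes and Lemma~\ref{alternating-lemma} again applies.

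For RMM, my strategy is to characterize the absorbing components of the induced Markov chain on colorings. Direct verification shows that each monochromatic coloring is a singleton absorbing component (both neighbors of every node agree with its color, so no ties arise) and, when $n$ is even, the two alternating colorings form a deterministic 2-cycle that is also absorbing. The substantive claim is that these are the only absorbing components: granted this, the bullets follow because odd $n$ admits no alternating coloring while even $n$ additionally admits the blinking pair. To establish the claim, I show that from any coloring $\mathcal{C}$ that is neither monochromatic nor alternating, there is a positive-probability path of finite length leading to a monochromatic coloring; any absorbing component containing such a $\mathcal{C}$ would then have to contain one of the singleton monochromatic components, a contradiction.

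To prove positive-probability escape, I analyze the \emph{all-ties-blue scenario} in which every tied node (independently) chooses blue for $n$ consecutive rounds, an event of probability at least $(1/2)^{n^2}$. In RMM on $C_n$, ties occur precisely at the endpoints of maximal monochromatic paths of length at least $2$; every other node (an interior vertex of a monochromatic path or any vertex of a maximal alternating path) has both neighbors of the same color and so updates deterministically. A direct count over the path partition of $\mathcal{C}$ shows that under this scenario the change in the blue count per round equals $2w + O_B - O_W$, where $w$ is the number of maximal white monochromatic paths of length at least $2$ and $O_B$ (resp.\ $O_W$) is the number of odd-length maximal alternating paths preceded in the cyclic order by a blue (resp.\ white) monochromatic path.

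The main obstacle is to show that $2w + O_B - O_W \geq 1$ whenever $\mathcal{C}$ is non-monochromatic and non-alternating. If $w \geq 1$, then since each white monochromatic path is followed by at most one alternating path we have $O_W \leq w$, giving a lower bound of $2w - w = w \geq 1$. If instead $w = 0$ (every maximal monochromatic path is blue), non-monochromaticity forces at least one alternating path to exist; since such a path lies between two blue monochromatic paths it must begin and end with a white node, hence have odd length, so $O_W = 0$ and $O_B$ equals the positive number of alternating paths, again giving the bound. Thus the blue count strictly increases each round under this scenario until the coloring becomes all-blue (if never alternating) or, only when $n$ is even, alternating; this occurs within $n$ rounds since the blue count is bounded by $n$, so an absorbing component is reached, completing the argument.
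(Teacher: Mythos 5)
Your proof is correct. The MM half is essentially identical to the paper's: both reduce to Lemma~\ref{alternating-lemma} after splitting on whether two adjacent same-colored nodes exist, with the alternating colorings (even $n$ only) handled separately as the blinking configuration. For the RMM half you take a genuinely different route. The paper argues locally: it fixes one maximal monochromatic path $P$ of length at least two (which must exist unless the coloring is alternating) and observes that with positive probability $P$ retains its color while its neighbors adopt it, so $P$ grows until it swallows the cycle; reaching a monochromatic coloring with positive probability from every non-alternating coloring then pins down the absorbing components. You instead argue globally: under the coupling in which every tied node picks blue, you compute the exact per-round change in the blue count over the path partition as $2w + O_B - O_W$ and verify it is at least $1$ for every non-monochromatic, non-alternating coloring, so the all-blue (or, for even $n$, alternating) coloring is reached within $n$ rounds with probability at least $2^{-n^2}$. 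Your bookkeeping checks out: ties occur exactly at endpoints of maximal monochromatic paths of length at least two, odd alternating paths join same-colored monochromatic paths and even ones join opposite-colored ones, and the injection from odd alternating paths to their preceding white paths gives $O_W \le w$. The paper's argument is shorter and needs no case analysis; yours buys a monotone potential function (the blue count under the biased coupling) that strictly increases every round, which is a reusable quantitative handle, at the cost of a more delicate combinatorial verification. One point worth making explicit in your $w=0$ case: a non-alternating coloring must contain at least one monochromatic path of length at least two, so "every maximal monochromatic path is blue" is not vacuous there.
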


\textsc{Proof Sketch.}
For MM, if there are two adjacent nodes with the same color, then according to Lemma~\ref{alternating-lemma}, the process reaches a stable coloring. If not (which is only possible for even $n$), then the process is in the blinking configuration. For RMM, we need a similar, but probabilistic, argument. The full proof is given in the appendix, Section~\ref{cycle-periodicity-mm-appendix}. \qed

\textbf{Number of Stable Colorings.} According to Theorem~\ref{cycle-periodicity-mm}, there are two stable colorings, namely the white and blue coloring, in RMM on cycle $C_n$. What about the number of stable colorings in MM? We answer this question in Theorem~\ref{stable-colorings-mm}, whose proof is given in the appendix, Section~\ref{appendix-stable-colorings-mm}.

\begin{theorem}
\label{stable-colorings-mm}
In MM on a cycle $C_n = (v_0, \cdots, v_{n-1})$, there are $\Theta(\Phi^n)$ stable colorings, where $\Phi=\frac{1+\sqrt{5}}{2}$ is the golden ratio.
\end{theorem}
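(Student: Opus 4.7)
The plan is to reduce counting stable colorings to counting closed walks in an auxiliary state graph, and then extract the asymptotic $\Theta(\Phi^n)$ from the spectrum of the corresponding transfer matrix.

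\textbf{Characterization.} On $C_n$ every vertex has exactly two neighbors, so under MM a vertex updates its color only when \emph{both} neighbors disagree with it (otherwise either there is a tie and it stays, or it already agrees with a majority). Hence $\mathcal{C}$ is stable iff every $v_i$ has at least one same-colored neighbor, or equivalently, reading the colors cyclically as a word $c_0 c_1 \cdots c_{n-1} \in \{b,w\}^n$, the triple $(c_{i-1},c_i,c_{i+1})$ never equals $bwb$ or $wbw$; i.e., every maximal monochromatic run in the cyclic word has length at least $2$.

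\textbf{Transfer matrix.} I would encode this local forbidden-triple constraint with a $4\times 4$ transfer matrix $M$ indexed by the pairs of consecutive colors $\{bb,bw,wb,ww\}$, where the $((x,y),(y,z))$-entry is $1$ iff $(x,y,z) \notin \{bwb, wbw\}$. The allowed transitions are $bb \to \{bb,bw\}$, $ww \to \{wb,ww\}$, $bw \to \{ww\}$, and $wb \to \{bb\}$. The map $c_0\cdots c_{n-1} \mapsto ((c_{i-1 \bmod n}, c_i))_{i=0}^{n-1}$ is a bijection between stable colorings of $C_n$ and closed walks of length $n$ in this state graph, so the number of stable colorings equals $\operatorname{tr}(M^n)$.

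\textbf{Eigenvalues and asymptotics.} A direct computation of $\det(\lambda I - M)$ gives $\lambda^4 - 2\lambda^3 + \lambda^2 - 1$, which I would factor using the substitution $u=\lambda^2-\lambda$ as $(u-1)(u+1) = (\lambda^2-\lambda-1)(\lambda^2-\lambda+1)$. The first factor has roots $\Phi$ and $\phi:=(1-\sqrt{5})/2$, while the second has roots $e^{\pm i\pi/3}$, which lie on the unit circle. Therefore
\[
\operatorname{tr}(M^n) \;=\; \Phi^n + \phi^n + 2\cos(n\pi/3).
\]
Since $|\phi|<1$ and the cosine term is bounded by $2$, the $\Phi^n$ term dominates and we conclude that the number of stable colorings is $\Theta(\Phi^n)$.

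\textbf{Main obstacle.} The characterization is immediate and the transfer-matrix setup is routine; the genuine computational step is the factorization of the characteristic polynomial so that the golden ratio emerges cleanly, together with verifying carefully that the trace correctly accounts for the cyclic wrap-around (the reason we encode states as ordered \emph{pairs} of consecutive colors rather than single colors is precisely so that the triple constraint can be enforced by local transitions). Once these ingredients are in place the conclusion follows simply from dominance of the Perron eigenvalue, since the other eigenvalues have modulus at most $1$.
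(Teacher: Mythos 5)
Your proof is correct, but it takes a genuinely different route from the paper's. Both arguments start from the same characterization (a coloring of $C_n$ is stable for MM iff no node has both neighbors of the opposite color, i.e., every maximal monochromatic run has length at least two), but from there the paper proceeds combinatorially: it maps each stable coloring $2$-to-$1$ onto a red--green coloring of the cycle recording where adjacent colors differ (no two adjacent reds, evenly many reds), counts the corresponding path colorings via the Fibonacci-type recursion $p(n)=p(n-1)+p(n-4)+\cdots+p(1)+2$, and sandwiches the cyclic count between $p(n-1)$ and $p(n)$ to get $\Theta(\Phi^n)$. You instead set up a $4\times 4$ transfer matrix on consecutive color pairs and compute $\operatorname{tr}(M^n)$; your transition table, the characteristic polynomial $\lambda^4-2\lambda^3+\lambda^2-1=(\lambda^2-\lambda-1)(\lambda^2-\lambda+1)$, and the resulting exact formula $\Phi^n+\phi^n+2\cos(n\pi/3)$ all check out (e.g., it gives $2$, $6$, and $12$ stable colorings for $n=3,4,5$, matching direct enumeration). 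Your approach buys an exact closed form --- essentially the Lucas numbers perturbed by a bounded periodic term --- which is sharper than the paper's $\Theta$-bound and makes the appearance of the golden ratio transparent via the Perron eigenvalue; the paper's approach is more elementary (no linear algebra or eigenvalue computation) and its bijection to ``no two adjacent reds'' colorings gives a direct combinatorial explanation of why a Fibonacci-type count arises. One small point worth stating explicitly in your write-up: the correspondence between cyclic words and closed walks counted by $\operatorname{tr}(M^n)$ is with a marked starting index, which is exactly what is wanted here since colorings are labeled by vertex, not counted up to rotation.
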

\section{Winning Sets}
\label{winning set}
How small could a winning set be? Berger~\cite{berger2001dynamic}, surprisingly, proved that there exist arbitrarily large graphs which have winning sets of constant-size in MM. Actually, a proof was sketched that this statement holds regardless of the tie-breaking rule. This is stated more formally in Theorem~\ref{min-winning-rmm} and for the sake of completeness a full proof is given in  the appendix, Section~\ref{min-winning-rmm-appendix}.

We say a model follows the majority rule if in each round, every node updates its color to the most frequent color in its neighborhood, and a tie is broken in any \textit{arbitrary} manner. This in particular includes MM and RMM.

\begin{theorem}
\label{min-winning-rmm}
For every positive integer $k$ and a model which follows the majority rule, there is an $n$-node graph with $n\ge k$, which has a winning set of size $36$.
\end{theorem}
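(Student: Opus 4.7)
The plan is to construct, for each positive integer $k$, an $n$-node graph $G$ with $n \ge k$ together with a designated $36$-vertex set $S_0$, such that fixing the nodes of $S_0$ to be blue forces the entire graph to become blue, regardless of the remaining initial coloring and of how ties are broken in subsequent rounds. The construction proceeds in concentric layers: I would set $V(G) = S_0 \cup L_1 \cup \cdots \cup L_\ell$ with pairwise disjoint $L_1, \ldots, L_\ell$, and engineer two properties. \textbf{(P1) Core resilience:} every $v \in S_0$ satisfies $|N_{S_0}(v)| > d(v)/2$; by the strict-majority version of the resilience criterion stated in the preliminaries, the core then stays fully blue forever under any tie-breaking rule. \textbf{(P2) Layer forcing:} every $v \in L_i$ has strictly more than half of its neighbors in $S_0 \cup L_1 \cup \cdots \cup L_{i-1}$. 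Layer sizes are chosen so that $36 + \sum_{i=1}^{\ell} |L_i| \ge k$, which is the only place the parameter $k$ enters.

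Given the construction, the proof that $S_0$ is winning is a one-line induction on $i$: the blue set after round $i$ contains $S_0 \cup L_1 \cup \cdots \cup L_i$. The base case follows from (P1). For the inductive step, every $v \in L_{i+1}$ has, by (P2), a strict majority of blue neighbors at the start of round $i+1$, so any majority-rule update sends $v$ to blue; no tie arises, so the adversarial tie-breaker plays no role. After $\ell$ rounds the whole graph is blue, which is exactly the defining property of a winning set.

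The main obstacle, and the only place where the constant $36$ enters, is realizing (P1) and (P2) simultaneously. Each edge from an $L_i$-vertex to an $S_0$-vertex adds to the external degree of the core vertex, eating into the strict-majority budget demanded by (P1), while layer vertices themselves need many inward edges to satisfy (P2). I would follow Berger's design: (a) take $S_0$ dense (for example, as a clique, giving each core vertex internal degree $35$), (b) connect each $v \in L_i$ not to all of the previously-blue region but to a small, carefully chosen subset, balancing the load on core and earlier-layer vertices, and (c) use later layers as shields, so that once $L_1$ is blue it serves as a new source of strict-majority neighbors and subsequent shells do not keep drawing on $S_0$. With layer sizes growing exponentially in $i$, the graph can be made arbitrarily large. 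The delicate technical content is the degree accounting that keeps every strict inequality in (P1)--(P2) satisfied in the final graph while holding the core to exactly $36$ vertices; this numerical tightening, rather than the asymptotic shape of the argument, is where the stated constant comes from.
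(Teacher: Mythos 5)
Your reduction from properties (P1)--(P2) to the winning-set conclusion is sound, but the construction you defer to ``degree accounting'' cannot exist: a monotone, strictly-forced, shell-by-shell growth from a $36$-vertex core reaches only a graph of constant size, so it cannot give $n\ge k$ for large $k$. To see this, let $\Psi_i$ denote the number of edges joining $S_0\cup L_1\cup\cdots\cup L_i$ to the remaining vertices. Property (P1) forces every $v\in S_0$ to have external degree $d(v)-|N_{S_0}(v)|<|N_{S_0}(v)|\le 35$, hence $\Psi_0\le 36\cdot 34$. Property (P2) says each $v\in L_{i+1}$ has strictly more neighbors inside $S_0\cup\cdots\cup L_i$ than outside it, so absorbing $L_{i+1}$ destroys strictly more boundary edges than it creates, giving $\Psi_{i+1}\le \Psi_i-|L_{i+1}|$. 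Therefore $\sum_i|L_i|\le \Psi_0$ and $n\le 36+36\cdot 34$, a fixed constant. This is precisely why Berger's theorem is described as surprising: any valid construction must exploit the non-monotone dynamics of the majority rule (vertices flipping back and forth and re-influencing regions already visited), not a one-way layered conversion. The ``delicate numerical tightening'' you postpone is thus not a technicality but a provable impossibility for your scheme.

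The paper's proof takes an entirely different route. It invokes Berger's Theorem 1.1 as a black box to obtain an arbitrarily large graph $G$ with an $18$-vertex winning set under MM (where a tie keeps the current color), and then neutralizes the tie-breaking rule by a doubling trick: take two disjoint copies of $G$ and add the edge $\{v_i^{(1)},v_i^{(2)}\}$ exactly when $d^G(v_i)$ is even. Every degree in the resulting graph $H$ is odd, so no tie ever arises and the tie-breaking rule becomes irrelevant; an induction shows that the process on $H$ simulates the MM process on $G$ in both copies simultaneously, so the union of the two copies of Berger's set, of size $36$, is a winning set for any model following the majority rule. To repair your argument you would need either to reproduce Berger's genuinely non-monotone construction or to use a reduction of this kind; the layered-resilience picture alone cannot get there.
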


\begin{theorem}
\label{cycle-winning-set}
In RMM on a cycle $C_n = (v_0, \cdots, v_{n-1})$, the only winning set is the set of all nodes. In MM on $C_n$, the minimum size of a winning set is equal to $\lfloor n/2\rfloor+1$.
\end{theorem}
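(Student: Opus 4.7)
I would split the proof into the RMM claim, the MM upper bound, and the MM matching lower bound.

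\textit{RMM.} The goal is to show every proper subset $S\subsetneq V$ fails to be a winning set (and $V$ itself is trivially winning). Fix $S\subsetneq V$ and consider the blue direction: take the initial coloring with $S$ blue and $V\setminus S$ white, so at least one white node exists. I would exhibit a single positive-probability realization of the tie-breaking coins that prevents the process from ever reaching the all-blue coloring, which shows $S$ is not winning. Use the realization ``every tie is broken to white.'' Then a node becomes white at time $t+1$ iff at least one of its two neighbors was white at time $t$ (on a degree-$2$ cycle, ``both blue'' is the only way to not become white under this rule). Letting $W_t$ denote the set of white nodes at time $t$, we have $W_{t+1}\supseteq N(W_t)$; and since every vertex of $C_n$ has two distinct neighbors, $W_t\neq\emptyset$ implies $|W_{t+1}|\ge 2$. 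Hence $W_t$ never empties, so this trajectory never hits all-blue. By Theorem~\ref{cycle-periodicity-mm} the process is absorbed in finite time into all-white or the blinking configuration (both tie-free), so the realization has positive probability and $S$ is not winning.

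\textit{MM upper bound.} I would exhibit the explicit set $S:=\{v_0\}\cup\{v_i: i\text{ odd},\ 1\le i \le n-1\}$ for $n$ even and $S:=\{v_0\}\cup\{v_i: i\text{ odd},\ 1\le i\le n-2\}$ for $n$ odd; both have size $\lfloor n/2\rfloor+1$, $V\setminus S$ is an independent set (so $S$ is a vertex cover), and $\{v_0,v_1\}\subseteq S$ is an edge. In the blue direction, any extension to $V\setminus S$ places every white vertex in $V\setminus S$ with both neighbors in $S$ (hence blue), so no two whites are adjacent and no white monochromatic arc of length $\ge 2$ exists; meanwhile the blue arc $\{v_0,v_1\}$ is present. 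Lemma~\ref{alternating-lemma} then applies: every maximal alternating path in the path partition is absorbed into its two adjacent monochromatic arcs, which are all blue, so the process converges to the all-blue coloring. The white direction is symmetric.

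\textit{MM lower bound.} I would prove any winning $S$ satisfies (a) $S$ is a vertex cover of $C_n$ and (b) $S$ induces at least one edge. For (a): if an edge $\{u,v\}$ has both endpoints in $V\setminus S$, the adversary makes $u,v$ white in the blue direction, creating a persistent white arc of length $2$, so all-blue is missed. For (b): if $S$ is independent, color $V\setminus S$ white; then the initial coloring has no blue arc of length $\ge 2$. Either $V\setminus S$ is also independent---which on $C_n$ forces $n$ even and $|S|=n/2$ with the initial configuration being alternating, so the process blinks forever---or $V\setminus S$ contains an adjacent white pair, producing a persistent white arc. In both subcases all-blue is not reached, contradicting winning-ness. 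Finally: the minimum vertex cover of $C_n$ has size $\lceil n/2\rceil$; for odd $n$ this equals $\lfloor n/2\rfloor+1$ and any such set has an edge by pigeonhole (max independent set size $(n-1)/2$); for even $n$ the only size-$n/2$ vertex covers are the two alternating sets, both independent, so (b) requires an extra vertex, giving $n/2+1=\lfloor n/2\rfloor+1$.

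The hardest step is (b) in the lower bound: ruling out the alternating/blinking configuration on even $C_n$ is precisely what lifts the bound from $n/2$ (the vertex-cover bound) up to $\lfloor n/2\rfloor+1$, and is where the cycle-specific periodicity (Theorem~\ref{cycle-periodicity-mm}) is essential.
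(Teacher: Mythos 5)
Your MM argument is essentially the paper's: the lower bound via ``a winning set must be a vertex cover and must additionally induce an edge'' (the second condition ruling out the two parity classes on even $n$, which would only produce the blinking configuration), and the upper bound via the explicit set $\{v_i : i \bmod 2 = 1\}\cup\{v_0\}$; your appeal to Lemma~\ref{alternating-lemma} to verify that this set is winning is slightly more explicit than the paper's, but the content is identical. For RMM you take a genuinely different route. The paper colors a single node outside $S$ white and traces one explicit \emph{finite} trajectory: the alternating path grows symmetrically for $\lceil n/2\rceil-1$ rounds and lands either in the blinking configuration (even $n$) or in a two-node white arc that can then take over the cycle (odd $n$). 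You instead color all of $V\setminus S$ white and argue via the monotone realization ``all ties to white,'' under which $W_{t+1}=N(W_t)$ never empties. Your version is more uniform and avoids the case analysis on $n$, which is a genuine advantage.

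The step you must tighten is the positive-probability claim for the realization ``every tie is broken to white.'' As stated this is an event about infinitely many coin flips, so you need that only finitely many of them are actually constrained, i.e., that the deterministic ties-to-white trajectory reaches a tie-free configuration after finitely many rounds. Theorem~\ref{cycle-periodicity-mm} does not give you this: it asserts that the \emph{random} process almost surely reaches white/blue/blinking, which is perfectly consistent with your one particular (measure-zero) realization never being absorbed. The fact you need is true and short to prove: since $W\subseteq N(N(W))$, the sets $W_0\subseteq W_2\subseteq W_4\subseteq\cdots$ stabilize after at most $n$ increments at some $A$ with $N(N(A))=A$, and a brief case analysis (if $A$ contains an adjacent pair then $A=V$; otherwise $A$ must be closed under shifts by $\pm 2$, which forces $A=V$ for odd $n$ and a parity class for even $n$) shows the limit orbit is the all-white coloring or the blinking configuration, both tie-free. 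Supply that argument, or sidestep the issue entirely by following the paper and exhibiting an explicit finite positive-probability trajectory that ends in an absorbing non-blue configuration.
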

\textsc{Proof Sketch.}
For RMM, we can prove that if there is a white node in the initial coloring, it is possible that the process does not reach the blue coloring. This implies that the only winning set is the set of all nodes.

Let $\mathcal{B}$ be a winning set in MM. For every two adjacent nodes, at least one must be in $\mathcal{B}$. By a case distinction between odd and even $n$, we can conclude that $|\mathcal{B}|\ge \lfloor n/2\rfloor+1$. Furthermore, this bound is tight since the set $\{v_i: (i \mod 2) =1\}\cup \{v_0\}$ is a winning set of size $\lfloor n/2\rfloor+1$. A full proof is given in the appendix, Section~\ref{cycle-winning-set-appendix}.
\qed

\section{Random Initial Coloring}
\label{random initial}

We determine the expected final number of blue nodes starting with a random coloring on a cycle graph for MM and RMM respectively in Theorems~\ref{random-mm} and \ref{random-rmm}. 

\begin{theorem}
\label{random-mm}
In MM on a cycle $C_n$ with a $p$-random initial coloring for some $p\ge 1/2$, the process reaches a stable coloring with $(1\pm \epsilon)\frac{2p^2-p^3}{1-p+p^2}n$ blue nodes, for an arbitrarily small constant $\epsilon>0$, in $\mathcal{O}\left(\log n\right)$ rounds w.h.p.
\end{theorem}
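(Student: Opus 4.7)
My strategy has three ingredients: (i) show that MM on $C_n$ stabilizes in $O(\log n)$ rounds w.h.p., (ii) compute the expected final blue count via a per-node conditional analysis, and (iii) apply Theorem~\ref{McDiarmid-thm} to concentrate the count around its mean. For (i), Lemma~\ref{alternating-lemma} tells us that whenever the initial coloring has two adjacent nodes of the same color (which fails only with probability at most $2(p(1-p))^{n/2}$), MM stabilizes in exactly $\lceil \ell/2 \rceil$ rounds, where $\ell$ is the longest alternating path in the initial path partition. For a $p$-random coloring, a specific window of $k$ consecutive nodes is alternating with probability at most $(p(1-p))^{\lfloor k/2 \rfloor}$. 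A union bound over the $n$ starting positions and the two alternating patterns gives $\Pr[\ell \geq c_p \log n] = o(n^{-2})$ for a large enough constant $c_p=c_p(p)$, so stabilization occurs in $O(\log n)$ rounds w.h.p.

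For (ii), by translation invariance of $C_n$, $\mathbb{E}[b_\infty] = n \cdot \Pr[v_0 \text{ is blue in the stable coloring}]$, and I condition on the local initial structure around $v_0$. If $v_0$ is blue and has at least one blue neighbor, then $v_0$ is on a maximal blue monochromatic path of length $\geq 2$ and stays blue forever, contributing $p(1 - (1-p)^2) = 2p^2 - p^3$. Otherwise $v_0$ lies on an alternating path of total length $k = L+1+R$, where $L, R \geq 0$ count the alternating neighbors to the left and right. Conditional on $v_0$ being blue initially, $L$ and $R$ are i.i.d.\ with $\Pr[L = 2j] = (1-p)q^j$ and $\Pr[L = 2j+1] = p^2 q^j$ where $q := p(1-p)$ (symmetric formulas hold when $v_0$ is initially white). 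The proof of Lemma~\ref{alternating-lemma} gives the absorption rule: if the two surrounding big monochromatic paths share a color, all $k$ alternating nodes adopt it; otherwise the left $k/2$ nodes take the left color and the right $k/2$ take the right color. Since the parities of $L$ and $R$ determine the colors of the surrounding monochromatic paths, casework on these parities together with the sign of $L - R$ reduces $\Pr[v_0 \text{ final blue} \mid v_0 \in \text{alt}]$ to four geometric-type double sums. These collapse via the identity $p^2 + q(p^2 - 2p + 2) = p(2-p)(1-q)$, yielding
\begin{equation*}
\Pr[v_0 \text{ final blue}] \;=\; (2p^2 - p^3) + \frac{p^3(1-p)(2-p)}{1-p+p^2} \;=\; \frac{2p^2 - p^3}{1 - p + p^2},
\end{equation*}
so the claimed expectation follows by linearity.

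For (iii), let $X = b_\infty$ be viewed as a function of the $n$ i.i.d.\ initial bits. Declare an input $\omega$ \emph{bad} if its longest initial alternating path exceeds $c_p \log n$; by (i), $\delta := |B|/|\Omega| = o(n^{-2})$. For $\omega \notin B$, flipping a single coordinate at vertex $v_i$ perturbs the initial path partition only inside the alternating path(s) containing or immediately adjacent to $v_i$ (in the worst case, such a flip merges two alternating paths separated by a length-two monochromatic block), so all final colors outside an $O(\log n)$-window around $v_i$ are unchanged. Hence $|X(\omega) - X(\omega')| \leq c = O(\log n)$ whenever $\omega \notin B$ and $\omega, \omega'$ differ in one coordinate, while the trivial bound $\beta = n$ covers the bad case. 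Plugging $\mathbb{E}[X] = \Theta(n)$, $c = O(\log n)$, $\beta = n$, and $\delta = o(n^{-2})$ into Theorem~\ref{McDiarmid-thm} gives the $(1\pm\epsilon)$ concentration w.h.p.\ for any fixed $\epsilon > 0$. The main obstacle is the bounded-difference estimate: I must carefully track how a single initial-bit flip propagates through the deterministic MM dynamics, in particular accounting for how flipping a vertex at the boundary of a small monochromatic block can merge it with a neighboring alternating path and thereby enlarge the affected region; confining all side-effects to an $O(\log n)$-neighborhood is the crux. The algebraic collapse in (ii) is also delicate and relies on the parity-driven symmetry of the absorption rule reflected in the identity above.
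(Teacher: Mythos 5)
Your proposal is correct and follows essentially the same route as the paper: bound the longest initial alternating path by $O(\log n)$ w.h.p., compute the per-node final-blue probability by casework on the local path-partition structure (your left/right extents $L,R$ with parity casework is just a reparametrization of the paper's sum over odd/even alternating path lengths and positions, and your conditional distributions and final value $\frac{2p^2-p^3}{1-p+p^2}$ check out), and then apply the McDiarmid extension with $\beta=n$, $c=O(\log n)$, $\delta=o(n^{-2})$ exactly as the paper does. The single-bit-flip sensitivity issue you flag as the crux is handled in the paper the same way you sketch, by observing that a flip can at most merge adjacent alternating paths into one of length $O(\log n)$, confining the affected region to an $O(\log n)$-window.
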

\begin{proof}
Let $\mathcal{E}$ be the event that there is no alternating path of size larger than $n-4$ in the initial coloring. The probability of $\mathcal{E}$ not happening can be upper-bounded by $2n(p(1-p))^{\lfloor(n-4)/2\rfloor}$ which is exponentially small in $n$. Since our statement needs to hold w.h.p. (i.e., w.p. $1-o(1)$), in the rest of the proof, we assume that $\mathcal{E}$ happens. (To be fully accurate, we need to condition on $\mathcal{E}$ happening in our calculations, but we skip that for the sake of simplicity.) Thus, in the initial coloring the nodes can be partitioned into maximal blue and white paths of length at least two and maximal alternating paths of size at most $n-4$. From such an initial coloring, the monochromatic paths keep growing and the alternating paths shrink until the process reaches a stable coloring with only monochromatic paths. (See proof of Lemma~\ref{alternating-lemma} for more details.)

Let $p_f$ be the probability that an arbitrary node $v$ is blue at the end. To compute $p_f$, we consider the three cases of $v$ being on a monochromatic path, on an odd alternating path, or an even alternating path in the path partition of the initial coloring, which results in Equation~(\ref{eq-pf}). (I) If $v$ is on a white path, it never becomes blue. If it is on a blue path, it remains blue forever. The probability of $v$ being on a blue path is equal to $p(p^2+2p(1-p))$ since $v$ and at least one of its neighbors must be blue. (See the first term in Equation~(\ref{eq-pf}).) (II) An odd alternating path is adjacent to two monochromatic paths of the same color (they potentially could be the same path) and all nodes on the alternating path eventually choose the color of the monochromatic path(s). The probability that $v$ is on an odd alternating path of length $k$ which is adjacent to blue path(s) is equal to $p^4kp^{\lfloor k/2\rfloor}(1-p)^{\lceil k/2\rceil}$. (The term $p^4$ is for two adjacent nodes at each side of the path to be blue. Note that since we assume that there is no alternating path of size larger than $n-4$, these four nodes are distinct.) Summing over all choices of $k$, we get the second term in Equation~(\ref{eq-pf}). (III) An even alternating path $P$ is adjacent to a blue path and a white path. The nodes on $P$ which are closer to the blue (white) path become blue (resp. white) after at most $|P|/2$ rounds. The probability that $v$ is on an alternating even path of length $k$ and is closer to the blue path is equal to $2p^2(1-p)^2\frac{k}{2}p^{k/2}(1-p)^{k/2}$. Summing over all choices of $k$, we get the third term in Equation~(\ref{eq-pf}).

\begin{equation}
\begin{split}
    \label{eq-pf}
    p_f =& \left(2p^2-p^3\right)+ p^4\sum_{\textrm{odd } 1\le k\le n-4}k p^{\bigl\lfloor \frac{k}{2}\bigr\rfloor}(1-p)^{\bigl\lceil \frac{k}{2}\bigr\rceil}+\\ &p^2(1-p)^2\sum_{\textrm{even } 1\le k\le n-4}k p^{\frac{k}{2}}(1-p)^{\frac{k}{2}}
    \end{split}
\end{equation}

Let us define $q:= p(1-p)$. Then we can write the last sum as $2q^2\sum_{i=1}^{\lfloor \frac{n-4}{2}\rfloor}iq^i$. This is equal to $2q^2\frac{q}{(1-q)^2}+\mathcal{O}(nq^{\frac{n}{2}})$, where we used the fact that this is the derivative of a geometric series. Similarly, we can show that the first sum in Equation~(\ref{eq-pf}) is equal to $p^3(\frac{2q}{(1-q)^2}-\frac{q}{1-q})+\mathcal{O}(nq^{\frac{n}{2}})$. By plugging these into Equation (\ref{eq-pf}), doing some basic calculations, and using the fact that $n$ tends to infinity, we get $p_f=\frac{2p^2-p^3}{1-p+p^2}$. (We are ignoring the additive term $\mathcal{O}(nq^{\frac{n}{2}})$ because it is converging to 0 and  can be hidden behind the estimate $(1\pm \epsilon)$ that we add later.) This implies that $\mathbb{E}[b_f]=\frac{2p^2-p^3}{1-p+p^2}n$ where $b_f$ is the final number of blue nodes.

Let $l_p$ denote the length of the longest alternating path in a $p$-random coloring on $C_n$. Then, for $l^* :=8\log_2 n$ we have
\begin{equation}
\label{eq-l*}
    \mathbb{P}[l_p\ge l^*]\le 2n (p(1-p))^{l^*/2}\le 2n \left(\frac{1}{2}\right)^{l^*/2}=\frac{2}{n^3}.
\end{equation}
Therefore, w.p. at least $1-2/n^3$, the process ends before $l^*$ rounds.

We claim that the random variable $b_f$ (defined over $\Omega =\{w,b\}^n$) is difference-bounded by $(\beta=n,c=4l^*+7,\delta=2/n^3)$. (I) Let $B$ be the set of colorings where there is an alternating path of length at least $l^*$. If we set $p=1/2$, then we pick a coloring uniformly at random among the $2^n$ colorings. According to Equation~(\ref{eq-l*}), the probability that such a randomly chosen coloring has an alternating path of size at least $l^*$ is at most $2/n^3$, i.e., $|B|/|\Omega|\le 2/n^3$. (II) Consider a coloring $\mathcal{C}\notin B$. Since the length of the longest alternating path is less than $l^*$, the process ends before $l^*$ rounds. Now, assume we flip the color of a node $v$ to obtain the coloring $\mathcal{C}'$. The longest alternating path in $\mathcal{C}'$ cannot be longer than $2l^*+3$. Thus, the process starting from $\mathcal{C}'$ ends in at most $t\le 2l^*+3$ rounds. Furthermore, the color of node $v$ influences the final color of at most $2t+1$ nodes, namely the nodes whose distance from $v$ is at least $t$. Therefore, the difference between the final number of blue nodes when starting from $\mathcal{C}$ and $\mathcal{C}'$ is at most $2(2l^*+3)+1=4l^*+7$, i.e., $|b_f(\mathcal{C})-b_f(\mathcal{C}')|\le 4l^*+7$. (We are actually quite generous with our calculations here.) (III) For two arbitrary colorings $\mathcal{C}$ and $\mathcal{C}'$, we trivially have $|b_f(\mathcal{C})-b_f(\mathcal{C}')|\le n$. Now, applying Theorem~\ref{McDiarmid-thm} implies that $\mathbb{P}[(1-\epsilon)\mathbb{E}[b_f]\le b_f\le (1+\epsilon)\mathbb{E}[b_f]]$, for some $\epsilon>0$, is at least $1-2\exp(-(\epsilon^2\mathbb{E}[b_f]^2)/(8n(4l^*+7)^2))-4/(n(4l^*+7))$ where we used $\beta=n$, $c=4l^*+7$, $\delta=2/n^3$. Using $\mathbb{E}[b_f]^2=(2p^2-p^3)^2n^2/(1-p+p^2)^2=\Theta(n^2)$ for $p\ge 1/2$ and $(4l^*+7)^2=\Theta(\log^2 n)$, the above probability is at least $1-\exp(-\Theta(n/\log^2 n))-1/\Theta((n\log n))=1-o(1)$. Furthermore, we already proved that the process ends w.h.p. before $8\log_2 n$ rounds. Therefore, the process reaches a stable coloring with $(1\pm \epsilon)\frac{(2p^2-p^3)}{1-p+p^2}n$ blue nodes in $\mathcal{O}(\log n)$ rounds w.h.p.
\end{proof}

\begin{theorem}
\label{random-rmm}
Consider RMM on $C_n$ and assume that $b_0=pn$ for some $0\le p\le 1$. Then, we have $\mathbb{E}[b_t]=pn$ for any $t\in \mathbb{N}$.
\end{theorem}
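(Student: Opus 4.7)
The plan is to show that $(b_t)_{t \ge 0}$ is a martingale with respect to the natural filtration generated by the colorings $(\mathcal{C}_t)_{t \ge 0}$, and then conclude by the tower property. The key observation is that because the cycle is $2$-regular, the one-step expected ``blueness'' of a node depends only on its two neighbors, and the averaging operator it induces is doubly stochastic in a way that preserves the total number of blue nodes in expectation.

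First, I would fix a node $v$ with neighbors $u_1,u_2$ in $C_n$, write $X_v^t := \mathbf{1}[\mathcal{C}_t(v)=b]$, and compute $\mathbb{E}[X_v^{t+1}\mid \mathcal{C}_t]$ by case analysis on $(X_{u_1}^t,X_{u_2}^t)$. If both neighbors are blue, $v$ becomes blue deterministically (value $1$); if both are white, $v$ becomes white (value $0$); if exactly one is blue, there is a tie and $v$ becomes blue with probability $1/2$. A straightforward simplification yields
\begin{equation*}
\mathbb{E}[X_v^{t+1}\mid \mathcal{C}_t] \;=\; \tfrac{1}{2}\bigl(X_{u_1}^t + X_{u_2}^t\bigr).
\end{equation*}
It is worth stressing that this expression does not involve $X_v^t$ itself, which is the real source of the clean identity: the RMM tie-breaking rule effectively erases $v$'s own color from the conditional expectation.

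Next, I would sum the above identity over all $v\in V$. Since $C_n$ is $2$-regular, each node appears as a neighbor in exactly two such sums, so
\begin{equation*}
\mathbb{E}[b_{t+1}\mid \mathcal{C}_t] \;=\; \sum_{v\in V}\tfrac{1}{2}\bigl(X_{u_1(v)}^t+X_{u_2(v)}^t\bigr) \;=\; \tfrac{1}{2}\cdot 2\sum_{v\in V}X_v^t \;=\; b_t.
\end{equation*}
Taking total expectations and iterating gives $\mathbb{E}[b_{t+1}]=\mathbb{E}[b_t]$ for every $t$, and since $b_0=pn$ is deterministic, $\mathbb{E}[b_t]=pn$ for all $t\in \mathbb{N}$.

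I do not anticipate a genuine obstacle here: the only subtlety is verifying the tie case carefully so that the $X_v^t$-dependence cancels, and noting that the doubling obtained from $\tfrac{1}{2}\cdot 2$ is precisely the degree of each node in $C_n$. This also suggests how the argument might generalize: the same local computation gives $\mathbb{E}[X_v^{t+1}\mid \mathcal{C}_t]=\tfrac{1}{d(v)}\sum_{u\in N(v)}X_u^t$ on any graph where every node has odd degree (no ties) or where ties always split evenly; summing preserves $\mathbb{E}[b_t]$ exactly on regular graphs, which matches the paper's remark about extending the result to vertex-transitive graphs such as $d$-dimensional tori.
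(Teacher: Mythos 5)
Your proof is correct and is essentially the same argument as the paper's: both establish that $(b_t)$ is a martingale by a local averaging step followed by double counting over the $2$-regular cycle. The paper merely packages the identity $\mathbb{E}[X_v^{t+1}\mid \mathcal{C}_t]=\tfrac{1}{2}(X_{u_1}^t+X_{u_2}^t)$ as a ``pebble-passing'' reformulation (each node sends a pebble of its color to both neighbors, each receiver picks one uniformly), which is the sender-side view of your receiver-side computation.
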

\begin{proof}
It suffices to prove that the sequence $b_0, b_1, b_2,\cdots$ is a discrete-time martingale, i.e., $\mathbb{E}[b_t|b_0, b_1, \cdots, b_{t-1}]=b_{t-1}$. Let us formulate RMM in a slightly different way. Assume that in each round, a white (blue) node sends a white (blue) pebble to each of its two neighbors. Then, each node uniformly and independently at random chooses one of the two pebbles it has received and picks its color. This is the same as the RMM rule because if the neighbors of a node agree on a color, it picks that color w.p. 1, and otherwise it picks a color independently and uniformly at random. Now, assume that there are $b$ blue nodes in the round $t-1$. Then, each of the $b$ blue nodes sends out two blue pebbles and each blue pebble is selected and results in a blue node w.p. $1/2$. Thus, by the linearity of expectation, the expected number of blue nodes in round $t$ is equal to $2b*(1/2)=b$. This concludes the proof that the sequence is a martingale. Therefore, we have $\mathbb{E}[b_t]=pn$ for any $t\in \mathbb{N}$.
\end{proof}

Theorem~\ref{random-rmm} holds for any initial coloring with $pn$ blue nodes, regardless of their position. We can apply this to the case of a $p$-random initial coloring because a simple application of the Chernoff bound~\cite{dubhashi2009concentration} implies that there are $pn$ blue nodes initially w.h.p. up to some ``small'' error factor.

\begin{figure*}[t]
\includegraphics[scale=0.47]{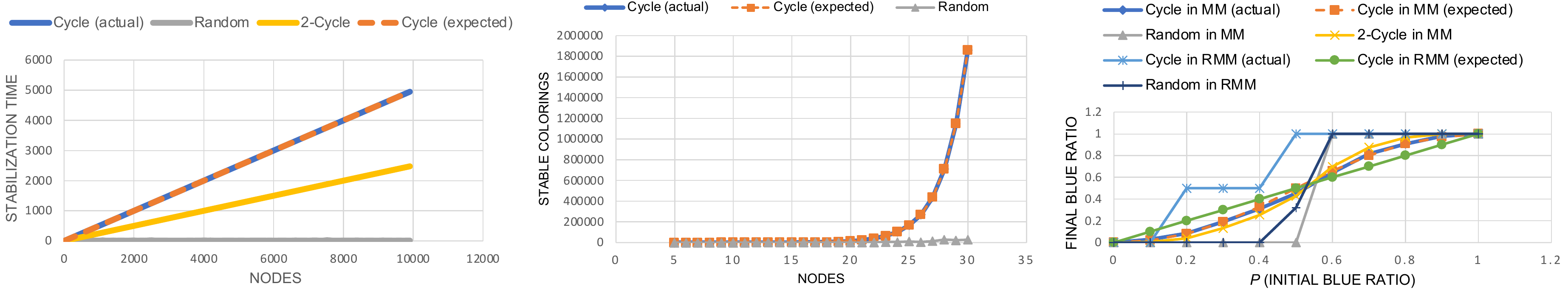}
  \caption{(left) The stabilization time in MM as a function of $n$, with a white path of length 2 (or 3) and an alternating path of length $n-2$ (or $n-3$) (middle) The number of stable colorings as a function of $n$ in MM (right) The final ratio of blue nodes for different values of $p$, starting from a $p$-random coloring. We use ``expected'' for what we expect according to our theoretical findings and ``actual'' is the output of the experiments.}
  \label{fig-exp}
\end{figure*}

\begin{corollary}
\label{corollary}
For RMM on $C_n$ with $b_0=pn$:
\begin{itemize}
    \item If $n$ is odd, the process reaches the blue coloring w.p. $p$ and the white coloring w.p. $1-p$.
    \item if $n$ is even, the process reaches the blue coloring w.p. $p^2$, the white coloring w.p. $(1-p)^2$ and the blinking configuration w.p. $2p(1-p)$.
\end{itemize}
\end{corollary}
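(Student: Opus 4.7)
The plan is to use Theorem~\ref{cycle-periodicity-mm} to reduce the question to identifying the hitting probabilities of the (few) absorbing states listed there, then to pin these down by evaluating bounded martingales on those states. Theorem~\ref{random-rmm} already supplies one such martingale, $b_t$; for the even case I will introduce a second one based on the bipartition of $C_n$.

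For odd $n$ the argument is immediate: $b_t\in[0,n]$ converges almost surely to $b_\infty\in\{0,n\}$, so bounded convergence combined with $\mathbb{E}[b_t]=pn$ yields $q_b\cdot n+(1-q_b)\cdot 0=pn$, i.e.\ $q_b=p$ and $q_w=1-p$.

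For even $n$, let $V_0,V_1$ be the two sides of the bipartition of $C_n$ and let $X_e(t)$, $X_o(t)$ count the blue nodes at time $t$ in $V_0$ and $V_1$ respectively, so that $b_t=X_e(t)+X_o(t)$. Because each node in $V_0$ copies a uniformly random neighbor in $V_1$ (and vice versa), $\mathbb{E}[X_e(t)\mid\mathcal{F}_{t-1}]=X_o(t-1)$ and $\mathbb{E}[X_o(t)\mid\mathcal{F}_{t-1}]=X_e(t-1)$. Moreover, $X_e(t)$ and $X_o(t)$ are determined by disjoint sets of independent per-node random bits, so they are conditionally independent given $\mathcal{F}_{t-1}$. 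Combining these two facts gives $\mathbb{E}[X_e(t)X_o(t)\mid\mathcal{F}_{t-1}]=X_e(t-1)X_o(t-1)$, so the product $X_e(t)X_o(t)$ is a bounded martingale. On the absorbing states it evaluates to $(n/2)^2$ for the blue coloring and to $0$ for the white coloring and for both blinking colorings (in the blinking case one factor is $0$ at every time). Bounded convergence therefore gives $q_b\cdot n^2/4=\mathbb{E}[X_e(0)X_o(0)]$, which under a $p$-random initial coloring factorizes as $\mathbb{E}[X_e(0)]\cdot\mathbb{E}[X_o(0)]=(pn/2)^2=p^2n^2/4$ by independence of the initial colors on $V_0$ and $V_1$. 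Hence $q_b=p^2$; plugging this into $q_b+q_a/2=p$ (from $\mathbb{E}[b_\infty]=pn$) together with $q_b+q_w+q_a=1$ yields $q_a=2p(1-p)$ and $q_w=(1-p)^2$.

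The main obstacle is finding a second martingale that separates the blue coloring from the blinking configurations, and then verifying the conditional-independence step that actually turns the product $X_e(t)X_o(t)$ into a martingale rather than an uncontrolled process. A minor subtlety is that the corollary is nominally stated under the deterministic hypothesis $b_0=pn$ while the evaluation $\mathbb{E}[X_e(0)X_o(0)]=p^2n^2/4$ uses the full $p$-random model; as the remark preceding the corollary already observes, these two settings differ only by a Chernoff-size correction that can be absorbed into the final statement.
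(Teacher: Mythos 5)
Your odd-$n$ argument coincides with the paper's: Theorem~\ref{cycle-periodicity-mm} forces absorption into $\{0,n\}$, and $\mathbb{E}[b_t]=pn$ from Theorem~\ref{random-rmm} pins down the two hitting probabilities. For even $n$ you take a genuinely different and correct route. The paper's appendix proof works at the level of the two sides $L,R$ of the bipartition: it argues inductively that the coloring of $L$ at even rounds depends only on the initial coloring of $L$ and on random bits disjoint from those driving $R$, so the two half-processes are independent; it then reruns the odd-case expectation argument on one half to get $\mathbb{P}[L\text{ blue}]=p$ and multiplies. You instead introduce the single product martingale $X_e(t)X_o(t)$, whose martingale property needs only the one-step identities $\mathbb{E}[X_e(t)\mid\mathcal{F}_{t-1}]=X_o(t-1)$ and $\mathbb{E}[X_o(t)\mid\mathcal{F}_{t-1}]=X_e(t-1)$ (both verified by the pebble reformulation, since every node's two neighbors lie on the opposite side) together with conditional independence of the two updates given $\mathcal{F}_{t-1}$, which holds because the round-$t$ tie-breaking bits of $V_0$ and $V_1$ are disjoint. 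Evaluating the bounded martingale at absorption isolates the blue coloring ($n^2/4$ against $0$ on white and on both blinking states), and the linear relations $q_b=p^2$ and $q_b+q_a/2=p$ finish the computation. Your version is more local --- you never need the process-level independence of the two halves, only a one-step factorization --- while the paper's version additionally yields the marginal law of each half. Both proofs share the caveat you flag: under a purely deterministic hypothesis $b_0=pn$ the even-$n$ claim can fail (start with $L$ all blue and $R$ all white for $p=1/2$; the process is already blinking, so $q_b=0\neq p^2$), so the even case must be read for the $p$-random coloring, which is precisely the setting in which your evaluation $\mathbb{E}[X_e(0)X_o(0)]=p^2n^2/4$ is exact rather than merely a Chernoff-level approximation.
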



\section{Experiments}
\label{experiments}

We also study MM and RMM from an experimental perspective. The conducted experiments not only complement our theoretical findings, but also open doors for future research on the connection between graph characteristics, such as conductance and vertex-transitivity, and the behavior of MM and RMM. Our experiments are executed on cycle, 2-cycle (to build a 2-cycle, take a cycle $C_n$ and add an edge between every two nodes which are in distance 2), and some random graph (which is the graph obtained by adding two randomly selected edges to each node in a cycle $C_n$).

Figure~\ref{fig-exp} (left) depicts the stabilization time of MM on a cycle graph $C_n$ for an extreme coloring, where there is a white path of length $2$ (or $3$) and an alternating path of length $n-2$ (or $n-3$). The stabilization time for the cycle perfectly matches the bound $\lceil n/2 \rceil-1$ proven in Theorem~\ref{mm-stabilization-thm}. Interestingly, once we add two random neighbors for each node on the cycle (to obtain the random graph), then the process ends extremely faster (i.e., in less than 25 rounds even for $n=10,000$). To argue that this is not merely the effect of adding extra edges, but rather how they are added, we ran the process on a 2-cycle graph (which has the same number of edges as the random graph). As you can observe, even though the process speeds up slightly, it is still substantially slower than the random case. Is it true that the stabilization time on random graphs, such as Erd\H{o}s-R\'{e}nyi random graph and random regular graphs, or more generally graphs with strong conductance properties is small, perhaps (sub)-logarithmic in $n$? This is left as an open problem.
(We should mention that our experiments for RMM demonstrated similar behavior change, but they are not included in Figure~\ref{fig-exp}.)

Figure~\ref{fig-exp} (middle) visualizes the number of stable colorings in MM for a cycle $C_n$ obtained from our experiments alongside the expected estimate $\Phi^{n}$ from Theorem~\ref{stable-colorings-mm}. Again, adding random edges results in a considerably different behavior, i.e., the number of stable colorings decreases drastically. Note that a stable coloring corresponds to a partition of the nodes into resilient sets. Thus, if there are many ways to partition a graph's node set into resilient sets, there exist many stable colorings. In graphs with strong conductance properties, such as the above random graph, for the sets which are not too large, the number of edges on the boundary is more than twice the number of edges inside the set. Thus, such sets do not form resilient sets. Another parameter which, we believe, plays a role is vertex-transitivity because it provides a certain level of ``symmetry'' which could result in the formation of resilient sets. Therefore, it would be interesting to characterize the number of stable colorings in terms of different graph parameters, in particular conductance and vertex-transitivity, in the future work.

Figure~\ref{fig-exp} (right) visualizes the final ratio of blue nodes by starting from a $p$-random coloring for different values of $p$ and $n=2000$ in both MM and RMM. For MM on $C_n$, the output of our experiments acceptably matches what one would expect according to our result in Theorem~\ref{random-mm}. For RMM, it, unsurprisingly, does not match the expected final density $p$ (see Theorem~\ref{random-rmm}) because we know that the process always reaches a monochromatic coloring or the blinking configuration (see Corollary~\ref{corollary}). (If we let $n$ be odd, e.g. $n=1999$, then it only can get monochromatic.) Once we switch to our random graph, the process exhibits a behavior called \textit{perfect classification}, i.e., if $p$ is smaller (larger) than $1/2$, then the process reaches the white (resp. blue) coloring. This is aligned with the results from prior work, cf.~\cite{zehmakan2020opinion}, on the relation between conductance and perfect classification. On the other hand, both cycle and $2$-cycle graphs, up to some degree, exhibit a property known as \textit{fair classification}, i.e., the expected final ratio of blue nodes is ``almost'' equal to their initial ratio $p$.

\section{Conclusion}
We studied two very fundamental majority based opinion diffusion processes. Developing several novel proof techniques, we provided tight bounds on the stabilization time, periodicity, minimum size of a winning set, and the expected final density in these processes.

We proved that the stabilization time and periodicity of RMM can be exponential for some graphs. It would be interesting to characterize graphs for which a polynomial upper bound exists.

We initiated the study of the number of stable colorings and provided tight bounds for the cycle graph in both MM and RMM. A potential future research direction is to determine the graph parameters which govern the number of stable colorings. Building on our experimental findings, we nominated conductance and vertex-transitivity as potential candidates.

It is known by prior work, cf.~\cite{zehmakan2020opinion}, that for perfect classification, it suffices that the graph enjoys strong conductance properties. What are the necessary and sufficient conditions for fair classification?

\newpage
\bibliographystyle{ACM-Reference-Format} 
\bibliography{ref}

\newpage
\appendix
\section{Appendix}

\subsection{Proof of Theorem~\ref{mm-stabilization-thm}}
\label{mm-stabilization-thm-appendix}
Let's first consider the case where there are no two monochromatic adjacent nodes in the initial coloring. This is possible only for even $n$. In that case, the process keeps switching between the two alternating colorings, i.e., the process has reached the blinking configuration. In this set-up, the stabilization time is zero by definition.

Now, assume that there are two adjacent monochromatic nodes. Then, the longest alternating path in the path partition is of size at most $n-2$. Thus, according to Lemma~\ref{alternating-lemma}, the process ends after at most $\lceil (n-2)/2\rceil= \lceil n/2 \rceil -1$ rounds.

To prove the tightness, for odd (even) $n$ consider a coloring where two (three) adjacent nodes are white and the remaining nodes form an alternative path of length $n-2$ (resp. $n-3$). According to Lemma~\ref{alternating-lemma}, the process needs $\lceil (n-2)/2\rceil$ rounds, for odd $n$, and $\lceil(n-3)/2\rceil$ rounds, for even $n$, to end. We observe that both these values are equal to $\lceil n/2\rceil -1$. Thus, the bound is tight.

\subsection{Proof of Lemma~\ref{markov-chain-lemma}}
\label{markov-chain-lemma-appendix}
Let $T_i$ be the expected number of rounds the Markov chain needs to reach from state $s_i$ to state $s_0$ or $s_k$. Obviously, we have $T_0=T_k=0$. Furthermore, from state $s_i$, for $1\le i\le k-1$, if we move to state $s_{i+1}$ w.p. $1/4$, then in addition to this step we need in expectation $T_{i+1}$ steps to finish. A similar argument applies to the transition to $s_{i-1}$ and remaining in state $s_i$, which happen w.p. $1/4$ and $1/2$ respectively. Thus, conditioning on these three possibilities we conclude that $T_i=\frac{1}{4}T_{i-1}+\frac{1}{4}T_{i+1}+\frac{1}{2}T_i+1$ for $1\leq i \leq k-1$. By rearranging the terms we get the following non-homogenous linear recursion of order 2:
\begin{align*}
    \frac{1}{2}T_{i+1}-T_i+\frac{1}{2}T_{i-1}=-2, \quad T_0=T_k=0.
\end{align*}
Let us first look at the homogeneous equation $\frac{1}{2}T_{i+1}-T_i+\frac{1}{2}T_{i-1}=0$ whose characteristic equation is equal to $\frac{1}{2}\lambda^2-\lambda+\frac{1}{2}=0$, for some value $\lambda$ to be determined. The characteristic equation has the repeated root $\lambda=1$. Thus, the general solution is of the form $T_i=A+Bi$ for some constants $A$ and $B$.

Now, we need to find a ``particular solution'' to the inhomogeneous equation. If we plug in $Ci^2$ for a constant $C$, we get:
\begin{align*}
    -2 = \frac{1}{2}C(i+1)^2-Ci^2+\frac{1}{2}C(i-1)^2 = C.
\end{align*}
So the general solution to the inhomogeneous equation is equal to $T_i=-2i^2+A+Bi$. Since $T_0=0$ and $T_0=-2*0^2+A+B*0=A$, we have $A=0$. Furthermore, $T_k=0$ and $T_k=-2k^2+A+Bk=-2k^2+Bk$ imply that $B=2k$. Therefore, we can conclude that $T_i=-2i^2+0+2ki=2i(k-i)$.

\subsection{Tightness of Theorem~\ref{rmm-stabilization-thm}}
\label{rmm-stabilization-thm-appendix}

Let $l=n-5$ for even $n$ and $l=n-4$ for odd $n$. (Note that $l$ is odd.)
We define a \textit{$k$-alternating coloring} to be a coloring with a blue path of length $n-k$ plus an alternating path of length $k$ for some odd $k$ between $5$ and $l$. (The alternating path starts and ends with a white node.) Consider a $k$-alternating coloring for $7\le k\le l-2$. Using an argument similar to the one from the proof of Theorem~\ref{rmm-stabilization-thm}, we can observe that from such coloring in the next round, we have a $(k+2)$-alternating coloring (similarly a $(k-2)$-alternating coloring) w.p. $1/4$ and a $k$-alternating coloring w.p. $1/2$.

Let's assume that the process starts from an $l'$-alternating coloring for $l'$ being the closest odd integer to $l/2$. Suppose that we say the process has stabilized if it reaches a $5$-alternating coloring or an $l$-alternating coloring. Note that this is obviously a lower bound on the original stabilization time since for the process to stabilize (i.e., reach a white/blue/blinking configuration, according to Theorem~\ref{cycle-periodicity-mm}) it must first reach one of these two colorings. Therefore, the defined process (running RMM starting from an $l'$-alternating coloring and stopping once reached a $5$-alternating or an $l$-alternating coloring) is equivalent to the Markov chain defined in Lemma~\ref{markov-chain-lemma}, where $s_i$, for $0\le i \le k=(l-5)/2$, corresponds to being in a $(2i+5)$-alternating coloring; in particular, $s_0$ and $s_k$ correspond to being in a $5$-alternating and an $l$-alternating coloring. According to Lemma~\ref{markov-chain-lemma}, the number of rounds to reach a $5$-alternating or an $l$-alternating coloring is $2*\frac{l'-5}{2}\left(\frac{l-5}{2}-\frac{l'-5}{2}\right)$. Using the fact that $l'$ is equal to $l/2\pm 1/2$ and $l\ge n-5$, it is straightforward to show that this is in $\Omega(n^2)$.


\subsection{Proof of Theorem~\ref{thm-exp-period}}
\label{thm-exp-period-appendix}

We define $\kappa$ to be the largest integer smaller than $n-6$ which is divisible by 4. Let us explain how to construct the graph $G$ step by step. Consider a path $P:= v_0,\cdots, v_{\kappa-1}$, a clique $C_w$ of size 3, and a clique $C_b$ of size $n-3-\kappa$. To build the graph $G$, add an edge between $v_0$ and a node in $C_w$, called $u_w$, and an edge between $v_{\kappa-1}$ and a node in $C_b$, called $u_b$. (Note that the output graph has exactly $n$ nodes.) See Figure~\ref{figure-appendix} for an example.

\begin{figure}[h]
  \centering
  \includegraphics[width=0.9\linewidth]{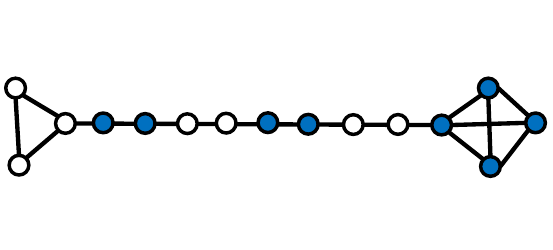}
  \caption{The construction with exponential periodicity in RMM for $n=15$.}
  \label{figure-appendix}
  \Description{}
\end{figure}
Let us observe that $C_w$ (analogously, $C_b$) is a resilient set. Each node $u$ in $C_w$ (analogously, $C_b$) has more than half of its neighbors in $C_w$ (resp. $C_b$). This is true for $u_w$ (resp. $u_b$) since it has 3 neighbors (resp. $n-\kappa-3\ge 3$) neighbors and only one of them is not in $C_w$ (resp. $C_b$). This is trivial for other nodes since they have all their neighbors in $C_w$ (resp. $C_b$).

Let $\mathcal{U}$ be the set of all colorings where $C_w$ is fully white and $C_b$ is fully blue. Note that $|\mathcal{U}|=2^{\kappa}=\Omega(2^n)$. We will prove that for every two colorings $\mathcal{C},\mathcal{C}'\in \mathcal{U}$, there is a non-zero probability to reach from $\mathcal{C}$ to $\mathcal{C}'$, i.e., there is a path from $\mathcal{C}$ to $\mathcal{C}'$ in the underlying directed graph of the corresponding Markov chain. Thus, the colorings in $\mathcal{U}$ form a strongly connected component. Note that there is no edge from a coloring in $\mathcal{U}$ to a coloring outside $\mathcal{U}$ because this requires that a node in $C_w$ or $C_b$ to change its color, which is not possible since they are both resilient sets. Therefore, this is actually an absorbing strongly connected component which implies that the periodicity is in $\Omega(2^n)$.

It remains to prove that for $\mathcal{C},\mathcal{C}'\in \mathcal{U}$, we can reach from $\mathcal{C}$ to $\mathcal{C}'$. Let us define coloring $\mathcal{C}_M\in \mathcal{U}$ where $v_i$ is blue if $(i\mod 4) = 0, 1$ and white otherwise. (See Figure~\ref{figure-appendix} for an example.) In $\mathcal{C}_M$, each node on $P$ has one blue neighbor and one white neighbor and thus chooses its color at random. This implies that we can reach any coloring in $\mathcal{U}$ from $\mathcal{C}_M$. Hence, it suffices to show that there is a path from each coloring $\mathcal{C}\in \mathcal{U}$ to $\mathcal{C}_M$. Firstly, from $\mathcal{C}$ we can reach the coloring where $P$ is fully blue. In the first round, we can color $v_{\kappa-1}$ blue since it has at least one blue neighbor, namely $u_b$. Then, we can color $v_{\kappa-2}$ blue (while $v_{\kappa-1}$ remains blue) since it has at least one blue neighbor, namely $v_{\kappa-1}$, and so on. Thus, after $\kappa$ rounds, $P$ is fully blue. Now, we argue that there is a non-zero probability that in the next four rounds the following updates take place: (i) $v_0$ become white (which is possible since the adjacent node $u_w$ is white) (ii) $v_0$ becomes blue (which is possible since $v_1$ is blue) and $v_1$ becomes white (which is possible since $v_0$ is white) (iii) $v_2$ becomes white and $v_1$ becomes blue (iv) $v_3$ becomes white. (Note that we assume any node which is not mentioned remains unchanged. This is possible since all other nodes have at least one adjacent node of the same color.) After these four rounds, $v_0,v_1$ are blue and $v_2,v_3$ are white, which is identical to their coloring in $\mathcal{C}_M$. Now, we repeat the same process for $v_4, v_5, v_6, v_7$ and so on. After $\kappa/4$ repetitions (i.e., $\kappa$ rounds) we reach $\mathcal{C}_M$. Overall, we can conclude that there is a non-zero probability to reach $\mathcal{C}_M$ from any coloring in $\mathcal{U}$. This finishes the proof.
\subsection{Proof of Theorem~\ref{cycle-periodicity-mm}}
\label{cycle-periodicity-mm-appendix}

First consider MM. If $n$ is odd, then for any coloring there must exist at least two adjacent monochromatic nodes. Thus, according to Lemma~\ref{alternating-lemma}, the process must reach a stable coloring. For even $n$, if there are two adjacent monochromatic nodes, then again we can apply the same argument. If not, then the process is in the blinking configuration.

Now, consider RMM. Let $n$ be odd. It suffices to prove that it is possible (i.e., there is a non-zero probability) to reach from any coloring to the white or blue coloring. Consider an arbitrary coloring $\mathcal{C}$. Since $n$ is odd, there must be two adjacent monochromatic nodes in $\mathcal{C}$. Thus, there exists a monochromatic, say blue, path $P$ of length at least 2. There is a non-zero probability that all nodes on $P$ remain blue in the next round and the node(s) adjacent to $P$ become/stay blue (because all these nodes have at least one blue neighbor). Thus, it is possible that path $P$ keeps growing until it takes over the whole cycle and we reach the blue coloring. For even $n$, if there is at least one monochromatic path of length two or larger, then the above argument applies again. Otherwise, the process is in the blinking configuration.

\subsection{Proof of Theorem~\ref{stable-colorings-mm}}
\label{appendix-stable-colorings-mm}

We say a blue (resp. white) node is \textit{solitary} if both of its neighbors are white (resp. blue). A coloring is stable in MM if and only if it has no solitary node. If there is no solitary node, then each node has a neighbor of the same color and keeps its color, i.e., the coloring is stable. If there is a solitary node in the coloring, it changes its color in the next round, i.e., the coloring is not stable. Thus, we want to determine $|\mathcal{S}_n|$, where $\mathcal{S}_n$ is the set of all colorings on a cycle $C_n$ with no solitary nodes. 

Let $\mathcal{R}_n$ denote the red-green colorings of a cycle $C_n$, where there is an even number of red nodes and there are no two adjacent red nodes. We define a mapping $\mathcal{M}:\mathcal{S}_n\rightarrow \mathcal{R}_n$. $\mathcal{M}$ maps a blue-white coloring $\mathcal{C}\in \mathcal{S}_n$, to a red-green coloring $\mathcal{C}'\in \mathcal{R}_n$ in the following manner: for $0\le i \le n-1$ if $\mathcal{C}(v_i)=\mathcal{C}(v_{i+1})$ ($i+1$ is calculated modular $n$), then $\mathcal{C}'(v_i) = g$ and $\mathcal{C}'(v_i) = r$ otherwise (where $g$ and $r$ stand for green and red). The generated red-green coloring $\mathcal{C}'$ is in $\mathcal{R}_n$ because if there are two adjacent red nodes in $\mathcal{C}'$, then there is a solitary node in $\mathcal{C}$ (but that is not possible since $\mathcal{C}$ is in $\mathcal{S}_n$). Furthermore, since the number of changes from blue to white and white to blue must be even, there is an even number of red nodes. We claim that for each $\mathcal{C}'\in \mathcal{R}_n$, there are exactly two colorings in $\mathcal{S}_n$ which are mapped to $\mathcal{C}'$. Let's try to construct a prospective coloring $\mathcal{C}\in \mathcal{S}_n$ which is mapped to $\mathcal{C}'$. Assume that $\mathcal{C}({v_0})=b$; then $\mathcal{C}(v_i)$, for $1 \le i \le n-1$, is enforced by $\mathcal{C}(v_{i-1})$ and $\mathcal{C}'(v_{i-1})$. For example, if $\mathcal{C}'(v_0)=r$, then $\mathcal{C}(v_1)=w$ (because $v_0$ and $v_1$ must have opposite colors when $\mathcal{C}'(v_0)=r$) and $\mathcal{C}(v_1)=b$ otherwise. Therefore, if we apply the mapping $\mathcal{M}$ on $\mathcal{C}$ we get a coloring which matches $\mathcal{C}'$ on all nodes $v_0, \cdots, v_{n-2}$ by construction. Note that $\mathcal{C}'(v_{n-1})$ must be the same since there are an even number of red nodes. We can construct another coloring which also gets mapped to $\mathcal{C}'$ by starting to color $v_0$ with white. Overall, we argued each coloring in $\mathcal{S}_n$ is mapped to exactly one coloring in $\mathcal{R}_n$ and for each coloring $\mathcal{C}'\in \mathcal{R}_n$, there are exactly two colorings in $\mathcal{S}_n$ which are mapped to $\mathcal{C'}$. This implies that $|\mathcal{S}_n|=2|\mathcal{R}_n|$.

To calculate $r(n):=|\mathcal{R}_n|$, let us first calculate $p(n)$ which is the number of red-green paths of length $n$ with no two adjacent red nodes and an even number of red nodes. It is straightforward to compute the stating values $p(1)$, $p(2)$, $p(3)$, and $p(4)$. Furthermore, we have $p(n)=p(n-1)+p(n-4)+\cdots+p(1)+2$ for $n\ge 5$. This is true because if for an $n$-node path $v_0\cdots, v_{n-1}$ we color $v_0$ with green, then there are $p(n-1)$ ways to color the remaining part. If we color $v_0$ red, then the second node must be green and then there must be at least one red node from $v_2$ to $v_{n-1}$ (since there must be an even number of red nodes). Let $v_j$ be the smallest $j$ between $2$ and $n-1$ for which $v_j$ is red. If $j\le n-3$, then $v_{j+1}$ must be green and the remaining part can be colored in $p(n-j-2)$ ways. If $j=n-2$, then $v_{n-1}$ must be colored green, which gives 1 coloring. $j=n-1$ also gives one coloring. This justifies the recursion  $p(n)=p(n-1)+p(n-4)+\cdots+p(1)+2$ for $n\ge 5$. This is a Fibonacci-type of sequence, which can be lower and upper bounded by $\Phi^n$, up to a constant factor. Thus, we conclude that $r(n)=\Theta(\Phi^n)$. We clearly have $r(n)\le p(n)$. Furthermore, if we color $v_0$ in  a cycle $C_n = (v_0,\cdots, v_{n-1})$ green, then the remaining nodes can be colored in $p(n-1)$ ways. Thus, we have $p(n-1)=\Theta(\Phi^{n-1})\le r(n)\le p(n)=\Theta(\Phi^n)$. This implies that $s(n)=\Theta(\Phi^n)$ since $s(n)=2r(n)$. (Actually if we solve the recursion accurately, we get $s(n)$ is equal to $\Phi^n$, up to some additive terms of smaller orders.)



\subsection{Proof of Theorem~\ref{min-winning-rmm}}
\label{min-winning-rmm-appendix}
Consider an arbitrary positive integer $k$. According to Theorem 1.1 in~\cite{berger2001dynamic}, there is an $n$-node graph $G=(V= \{v_1,\cdots,v_n\},E)$, for some $n\ge k$, where the set $\mathcal{B}:=\{v_1,\cdots, v_{18}\}$ forms a winning set in MM. Consider two copies of graph $G$, namely $G_1=(V_1=\{v_1^{(1)},\cdots, v_n^{(1)}\},E_1)$ and $G_2=(V_2=\{v_1^{(2)},\cdots, v_n^{(2)}\},E_2)$. To construct our desired graph $H$, let us add the edge $\{v_i^{(1)}, v_i^{(2)}\}$ for each $1\leq i \le n$ if $d^G(v_i)$ is even.

Consider a model M which is the same as MM but with a different tie-breaking rule. Let $\mathcal{C}_t(v)$ denote the color of node $v\in V$ at round $t\in \mathbb{N}$ in MM on $G$ assuming that $\mathcal{C}_0|_{\mathcal{B}}=b$ and $\mathcal{C}_0|_{V\setminus\mathcal{B}}=w$. Let $\mathcal{C}_t'(v)$ be the color of node $v\in V_H= V_1\cup V_2$ for $t\in \mathbb{N}$ in the model M on $H$ assuming that $\mathcal{C}'_0|_{\mathcal{B}_H}=b$ and $\mathcal{C}'_0|_{V_H\setminus \mathcal{B}_H}=w$, where $\mathcal{B}_H := \{v_1^{(1)},\cdots, v_{18}^{(1)}\}\cup \{v_1^{(2)},\cdots, v_{18}^{(2)}\}$. We claim that 
for every $t\in \mathbb{N}$ and $1 \le i \le n$, we have $\mathcal{C}_t(v_i)=\mathcal{C}^{\prime}_t(v_i^{(1)})=\mathcal{C}^{\prime}_t(v_i^{(2)})$. Combining the last statement with the fact that $\mathcal{C}_T|_V=b$ for some $T\in \mathbb{N}$ implies that $\mathcal{C}'_T|_{V_H}=b$. Thus, $H$ is a graph with more than $k$ nodes which has a winning set of size $36$ in the model M.

Using induction, we prove that for every $t\in \mathbb{N}$ and $1 \le i \le n$, we have $\mathcal{C}_t(v_i)=\mathcal{C}^{\prime}_{t}(v_i^{(1)})=\mathcal{C}^{\prime}_t(v_i^{(2)})$. This is true for the base case of $t=0$ by construction. As the induction hypothesis, assume that the statement is true for some $t-1\ge 0$. We show that it also holds for $t$. Consider an arbitrary $1\le i \le n$. If $d^G(v_i)$ is odd, then $d^G(v_i)=d^{H}(v_i^{(1)})= d^{H}(v_i^{(2)})$ and $v_i^{(1)}$ (similarly $v_i^{(2)}$) has exactly the same number of blue nodes in $\mathcal{C}'_{t-1}$ as node $v_i$ in $\mathcal{C}_{t-1}$ by the induction hypothesis. Furthermore, since the degree is odd, there is no tie-breaking, i.e., the update is the same for M and MM. Thus, we will have $\mathcal{C}_t(v_i)=\mathcal{C}^{\prime}_t(v_i^{(1)})=\mathcal{C}^{\prime}_t(v_i^{(2)})$. Now, assume that $d^G(v_i)$ is even. Let us focus on the color of $v_i^{(1)}$ in round $t$. (The same argument works for $v_i^{(2)}$.) If $|N_{b}^{\mathcal{C}_{t-1}}(v_i)|> |N_{w}^{\mathcal{C}_{t-1}}(v_i)|$, then we actually know that $|N_{b}^{\mathcal{C}_{t-1}}(v_i)|\ge |N_{w}^{\mathcal{C}_{t-1}}(v_i)|+2$ since $d^G(v_i)$ is even. Thus, by the induction hypothesis, the difference between the number of blue nodes and white nodes in the neighborhood of $v_i^{(1)}$ in $G_1$ in the $(t-1)$-th round is at least 2. This implies that $v_i^{(1)}$ chooses blue color under model M in the next round, regardless of the color of its other neighbor, namely $v_i^{(2)}$. A similar argument works for $|N_{w}^{\mathcal{C}_{t-1}}(v_i)|> |N_{b}^{\mathcal{C}_{t-1}}(v_i)|$. It remains to consider $|N_{w}^{\mathcal{C}_{t-1}}(v_i)|=|N_{b}^{\mathcal{C}_{t-1}}(v_i)|$. In this case, node $v_i$ keeps its color in round $t$. This is also true for $v_i^{(1)}$ because it has exactly the same number of blue and white neighbors in $G_1$ and thus it chooses the color of its additional neighbor, i.e., $v_i^{(2)}$, which has the same color by the induction hypothesis. Thus, it also keeps its color, regardless of the tie-breaking rule. (In general, there is no tie since all nodes in $H$ have odd degrees.)

\subsection{Proof of Theorem~\ref{cycle-winning-set}}
\label{cycle-winning-set-appendix}
For RMM, it suffices to prove that if there is even one white node in the initial coloring, there is a non-zero probability that the process does not reach the blue coloring. Let one white node form an alternating path of length one and the rest of the cycle be blue. Then, it is possible that the alternating path grows from both sides in each round. After $\lceil n/2\rceil-1$ rounds, the process reaches the blinking configuration (if $n$ is even) and a coloring with two adjacent white nodes (if $n$ is odd). In the first case the process never reaches the blue coloring and in the second one it is possible that this white path grows in each round until the process reaches the white coloring. Thus, there is no winning set of size $n-1$ or smaller.

Consider a winning set $\mathcal{B}$ in MM. For every two adjacent nodes, at least one must be in $\mathcal{B}$. This is true because otherwise if initially only nodes in $\mathcal{B}$ are blue such two adjacent nodes are colored white and remain white forever, which is in contradiction with $\mathcal{B}$ being a winning set. This implies that $|\mathcal{B}|\ge \lceil n/2\rceil$. For odd $n$, this implies that $|\mathcal{B}|\ge \lfloor n/2 \rfloor +1$. For even $n$, if there are no two adjacent nodes outside $\mathcal{B}$ and $|\mathcal{B}|= \lceil n/2\rceil=n/2$, then it means only nodes in odd (or even) position are in $\mathcal{B}$. In that case, $\mathcal{B}$ is not a winning set because starting from a coloring where only $\mathcal{B}$ is blue, the process is in the blinking configuration. Therefore, in the even case, we have $|\mathcal{B}|\ge n/2+1=\lfloor n/2\rfloor+1$. Furthermore, the bound of $\lfloor n/2\rfloor+1$ is tight. For both odd and even $n$, the set $\{v_i: (i \mod 2) =1\}\cup \{v_0\}$ is a winning set of size $\lfloor n/2\rfloor+1$.


\subsection{Proof Sketch of Corollary~\ref{corollary}}
\label{corollary-appendix}

For odd $n$, according to Theorem~\ref{cycle-periodicity-mm}, the process must reach the white or blue coloring. Let $b_f$ denote the number of blue nodes in the final coloring. We have $\mathbb{E}[b_f]=\sum_{i=1}^{n} i*\mathbb{P}[b_f=i]=n*\mathbb{P}[b_f=n]$, where for the last equality we used the above statement. Furthermore by Theorem~\ref{random-rmm}, we have $\mathbb{E}[b_f]=pn$. Therefore, we get $\mathbb{P}[b_f=n]=p$.

For even $n$, since graph $C_n$ is a bipartite graph, its node set can be partitioned into two subsets $L$ and $R$, which both form an independent set of size $n/2$. According to Theorem~\ref{cycle-periodicity-mm} after $f$ rounds, for some even integer $f$, all nodes in $L$ (similarly in $R$) share the same color. Using a similar argument to the one for the odd case and the fact that $L$ and $R$ are symmetric, we can show that the probability that all nodes in $L$ (similarly $R$) are blue in round $f$ is equal to $p$.

Furthermore, by a simple inductive argument, one can show that the color of nodes in $L$ (similarly $R$) in round $t$ for some even integer $t$ only depends on the color of nodes in $L$ (resp. $R$) in round $0$ (i.e., their initial coloring). This implies that the color of nodes in $L$ is independent of the color of nodes in $R$ in round $t$.

Combining the statements from the last two paragraphs, we can conclude that in round $f$ both $L$ and $R$ are blue (i.e., the fully blue coloring) w.p. $p^2$, both $L$ and $R$ are white (i.e., the fully white coloring) w.p. $(1-p)^2$, and one of them is blue and the other one is white (i.e., the blinking configuration) w.p. $2p(1-p)$.
\end{document}